\begin{document}
\title{Given enough choice, simple local rules percolate discontinuously}
\author{Alex Waagen\inst{1} \and Raissa M. D'Souza\inst{1,2}
}                     
\institute{University of California, Davis, CA 95616, USA \and The Santa Fe Institute, Santa Fe, NM 87501, USA}
\date{Received: date / Revised version: date}
%
\abstract{
%
%
There is still much to discover about the mechanisms and nature of
discontinuous percolation transitions. 
Much of the past work considers
graph evolution algorithms known as Achlioptas processes in which
a single edge is added to the graph from a set of $k$ randomly chosen
candidate edges at each timestep until a giant component emerges. Several
Achlioptas processes seem to yield a discontinuous percolation
transition, but it was proven by Riordan and Warnke that the transition
must be continuous in the thermodynamic limit. However, they also
proved that if the number $k(n)$ of candidate edges increases with
the number of nodes, then the percolation transition may be discontinuous.
Here we attempt to find the simplest such process which yields a discontinuous
transition in the thermodynamic limit. We introduce a process which
considers only the degree of candidate edges and not component size.
We calculate the critical point $t_{c}=(1-\theta(\frac{1}{k}))n$
and rigorously show that the critical window is of size $O\left(\frac{n}{k(n)}\right)$.
If $k(n)$ grows very slowly, for example $k(n)=\log n$, the critical
window is barely sublinear and hence the phase transition is discontinuous
but appears continuous in finite systems. We also present arguments that Achlioptas processes with 
bounded size rules will always have
continuous percolation transitions even with infinite choice. 
\PACS{
{64.60.ah}{Percolation} \and
{64.60.aq}{Networks} \and
{89.75.Hc}{Networks and genealogical trees}
     } 
} 
\maketitle
\section{Introduction}
\label{intro}

Percolation describes the onset of large scale connectivity amongst
sites on a lattice or nodes in a network, and 
mathematical models of percolation serve as 
an underpinning for analyzing properties of networks, including epidemic
thresholds, vulnerability, and robustness~\cite{stauffer1985introduction,NW1999,Cohen2000,Calloway2000,MoorePRE,Newman:2010:NI:1809753}.
In a prototypical process, one starts from a collection of $n$ isolated
nodes, and then edges are chosen uniformly at random from the set of all possible edges and sequentially added to the graph. 
A set of nodes connected together by following a path along edges is considered a {\it component}, so initially all components are of size one.  
As the number of edges increases, approaching $\frac{n}{2}$, a giant component (i.e., a component linear in system size $n$)
emerges in a continuous phase transition.  As the behaviors of a system can be radically different if large-scale connectivity
exists,  altering the location and nature of the percolation phase transition has been an outstanding challenge.

%

Around the year 2000, the notion of delaying or enhancing the onset of percolation by a small variant on the standard process was introduced in a procedure which is now referred to as an ``Achlioptas process''~\cite{riordanAnnals2012}. 
Rather than sampling a single edge at a time, \textit{two}
edges are sampled simultaneously but only the edge that best satisfies
a pre-specified criteria is added to the graph and the other edge discarded. 
The criteria used to select the winning edge typically considers the
sizes of the components that would be joined by the edge. For example,
to enhance the onset of percolation, choose the edge that maximizes
the size of the resulting component. To delay, choose the edge that
minimizes it. 
This is an example of the concept of the ``power of two choices'' which has previously yielded considerable benefits
in randomized algorithms \cite{azar1999,Mitzenmacher00thepower,Mitzenmacher:2001:PTC:504336.504343,Sinclair:2010:DSR:1886521.1886576}.

Initial analysis of Achlioptas processes established their effectiveness in delaying percolation~\cite{Bohman2001}.  But only more recently was it shown that the nature of the phase transition can be altered by such processes~\cite{Achlioptas13032009}.  In~\cite{Achlioptas13032009} they analyze the ``Product Rule'', an Achlioptas process where two edges are examined simultaneously, but only the edge that minimizes the product of the sizes of the two components  to be joined by the edge is added to the graph. All initial numerical evidence of the Product Rule indicated that the phase transition to large scale connectivity is discontinuous.
Yet, analytic arguments~\cite{PhysRevLett.105.255701,nagler11,PhysRevLett.106.225701,manna2011new,lee2011continuity,tian2012nature} and a rigorous proof~\cite{Riordan15072011} now show that, although the transition appears discontinuous for any finite system,  the transition is in fact continuous in the thermodynamic limit of system size $n\rightarrow \infty$.  However the transition belongs to a universality class distinct from standard continuous percolation transitions~\cite{PhysRevLett.106.225701,tian2012nature}. 

Beyond the Product Rule, it is now known that any Achlioptas process which chooses between two edges, or even chooses amongst any fixed number of edges, yields a continuous percolation transition in the thermodynamic limit~\cite{Riordan15072011}.  On the other hand, several related models have now been shown to have truly discontinuous percolation transitions~\cite{chen2011exp,panagiotou2011explosive} or more exotic behaviors such as an initially continuous transition followed by a discontinuous jump in the size of the giant component arbitrarily close to the first transition point~\cite{PhysRevX.2.031009}. And moreover, even the first transition point may be discontinuous if the process is restricted to a lattice or some other structure. It was recently shown that a discontinuous transition may be obtained in Euclidean space with an Achioptas rule which is defined with respect to spanning clusters. However, in order for the transition to be discontinuous with a fixed number of choice $m$, the dimension $d$ must be less than $6$. If $d > 6$, then the number of choices  $m$ must be at least on the order of $\log n$~\cite{Cho2013science}.    

Given the broad array of work on this topic and the lack of complete
understanding, it is important to isolate the essential ingredients
necessary for a discontinuous percolation transition. In \cite{Riordan15072011}
they prove that if instead of a constant number of choices, $k$,
the number is a function $k(n)\rightarrow\infty$ as $n\rightarrow\infty$,
the percolation transition can be discontinuous. However, this discontinuous
transition is considered trivial~\cite{PhysRevX.2.031009} because
it occurs 
when the number of edges added (denoted by $t$) satisfies the condition $t=n$. In most processes, there is a high probability of merging distinct components at each individual step, and the number of distinct components at time t will be very nearly $n - t + 1$. Thus, a transition occurring at $t = n$ most often indicates that the only reason a giant component emerged at all was because the process ``ran out" of components to merge. 

Here we show that if $k(n)\rightarrow\infty$
as $n\rightarrow\infty$ then 
an extremely simple rule, which considers only node
\textit{degree}, is sufficient to yield a truly discontinuous transition, 
which happens at critical point $t_{c}=(1-\theta(\frac{1}{k}))n$.
Furthermore, the simplicity of the rule allows us to rigorously bound
the scaling window which we show has width $O(\frac{n}{k})$. Although
the transition can be shown rigorously to be discontinuous in the
thermodynamic limit, for any finite size $n$, if $k(n)$ increases
extremely slowly with $n$, (e.g., $k(n)=\log n$), then the transition will not have a single large
discrete jump. 
This is the opposite of what has been observed with the Product Rule which exhibits large discrete jumps in any finite system, thus appearing to be discontinuous,  
but actually being continuous in the thermodynamic limit.

The remainder of the manuscript is organized as follows. In Sec.~\ref{sec:1} we formally present the Achlioptas processes studied herein, in particular distinguishing between unbounded size rules, such as the Product Rule, and bounded size rules which treat all components of size greater than a specified value $M$ as equivalent. 

Sec.~\ref{sec:results} contains the bulk of the results discussed above in addition to 
arguments showing that bounded size rules
will always have continuous percolation transitions even if the number
of choices, $k(n)$,  increases unboundedly.

\section{Achlioptas Processes with Varying Choice}
\label{sec:1}

We consider Achlioptas processes run on a set of $n$ initially isolated nodes.  At each edge addition a number of candidate edges are first simultaneously examined, but only one of the edges is selected according to a pre-specified rule and added to the graph. Here we consider the case in which the number of candidate edges depends on system size, $k(n)$.

In Sec.~\ref{subsec:unbounded} we formally define unbounded size rules
and briefly discuss the known result that there exists an Achlioptas
process with an increasing number $k(n)$ of choices in which the
transition to large scale connectivity is discontinuous. In Sec.~\ref{subsec:bounded} we formally introduce bounded size rules and 
give a semi-rigorous argument that in bounded processes with infinite
choice, the phase transition remains continuous. In Sec.~\ref{subsec:degreebased} we introduce
a local rule which depends only on node degree that results in a discontinuous
phase transition.

\subsection{Unbounded Size Rules}
\label{subsec:unbounded}

An unbounded size rule, like the Product Rule, treats all components of distinct sizes uniquely. In contrast, a bounded size rule treats all components of size greater than $M$ as if they were of size $M$.
Unbounded size rules in which the number of choices $k(n)$ increases
with system size have the greatest potential for discontinuity. Before proceeding further,
we first make rigorous some definitions given in the introduction.

\begin{definition}
The component $C(x)$ containing a node $x$ is the set of nodes which
are either reachable from $x$ or which can reach $x$ by following
a simple path. A component $C$ is giant if $\frac{|C|}{n}$ does
not converge to $0$ as $n \rightarrow \infty$.
\end{definition}

\begin{definition}
The critical point of the percolation transition is $t_{c}=\inf\{t:\mbox{ w.h.p. a giant component exists at time \ensuremath{tn}}\}$.
The transition is discontinuous if there exist some functions $m(n)$
and $\Delta(n)=o(n)$ such that w.h.p no giant component exists when
$m(n)$ or fewer edges have been added to the graph and at least one
giant component exists when $m(n)+\Delta(n)$ or more edges have been
added to the graph. That is, the percolation transition is discontinuous
if the addition of a sublinear number of edges results in the emergence
of a giant component. 
\end{definition}

Most Achlioptas processes that have been studied fall under the following
general definition. At each timestep:
\begin{enumerate}
\item Two candidate edges are chosen uniformly at random from the set of
all possible edges. That is, each candidate edge would link two nodes
chosen uniformly at random. 
\item One of the two candidate edges is added according to some rule. 
\end{enumerate}

A more general definition of an Achlioptas process was given by Riordan
and Warnke~\cite{Riordan15072011}. Starting with a set of n isolated nodes, at each timestep
a set of $k$ vertices is chosen uniformly at random, where $k$ is a constant. Some subset
of edges among those vertices, possibly empty, is added to the graph.
However, if there exist two components of size $\epsilon n$ or greater,
the two components must be merged with probability at least $\epsilon^{k}$.
To meet this requirement it is sufficient to to demand that if all
vertices are chosen from exactly two components, then at least one
edge must be added. 

Utilizing this definition, Riordan and Warnke proved ~\cite{Riordan15072011} that if the number of choices
$k(n)$ increases unboundedly with system size and if the nodes are linked which are contained in the two smallest distinct components among those selected, then the percolation
transition is discontinuous as $t\rightarrow n$. The rule in which the two smallest distinct components are linked is referred to as  the ``SDC rule.'' Here the time $t$
is the same as the number of edges added to the graph. In particular,
they show that it takes at most $5\epsilon n$ edge additions
for $C_{1}$
to grow from less than $\epsilon n$ to greater than $(1-\epsilon)n$,
where $C_{i}$ denotes the size of the $i$th largest component. 

We will
refer to the time period between these two events as the critical window, 
parameterized by $\epsilon$. In Fig. 1, $C_{1}$
is plotted for the last 50 edges added during the process for two different systems sizes: $n=100,000$ and $n=1,000,000$.
This figure shows that it is extremely conservative to bound the size
of the critical window by $5\epsilon n$. In fact, it appears that
the size of the critical window is of constant rather than linear
size for fixed $\epsilon$.

\begin{figure}
\resizebox{0.5\textwidth}{!}{
 \includegraphics{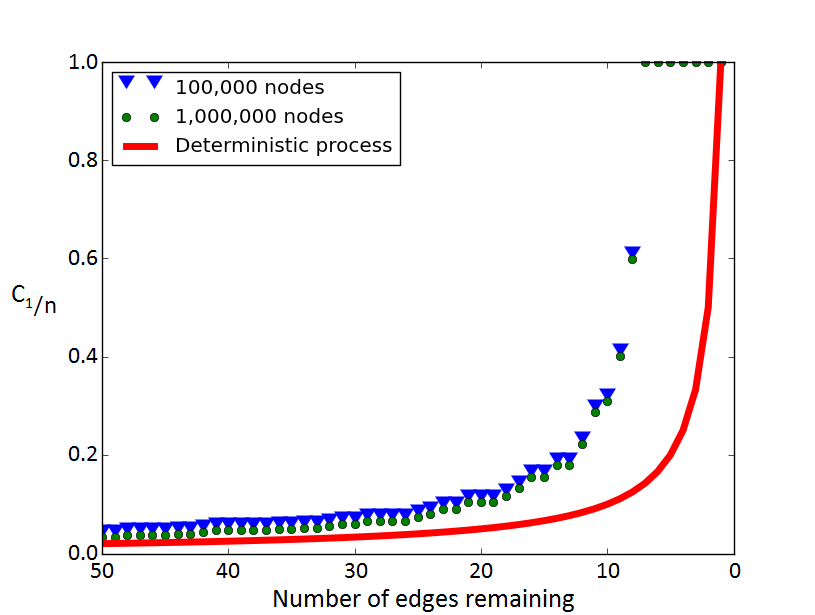}

}
\caption{
The solid curve denotes the size of the largest connected component in the deterministic process in which the two smallest components are always merged. The dotted curves are single simulated runs of the SDC rule with $k(n)=\log_2(n)$. The x axis denotes the final 50 edges added from $t = n - 50$ to $t = n$. Since this is a finite number of edges,  note that in each case the largest component grows from
$\frac{1}{10}n$ to $\frac{9}{10}n$ with the addition of fewer than
10 edges. We conjecture that the largest component grows from $\epsilon n$
to $(1-\epsilon)n$ with the addition of  $(\theta\frac{1}{\epsilon})$
edges, as is the case in the deterministic process denoted by the solid curve.
}

\label{fig:1}       
\end{figure}

\subsection{Bounded Size Rules}
\label{subsec:bounded}

In the family of all possible rules for Achlioptas processes, bounded-size
rules treat all components of size greater than some constant $M$ as equivalent in size.
One of the simplest such rules is the Bohman-Frieze process~\cite{Bohman2001}. At each edge addition first two randomly selected candidate edges are examined.  If either edge links two isolated nodes it is added to the graph. (If both edges satisfy this condition, one of them is chosen at random and added.) If neither edge links two isolated nodes, a randomly selected edge is added instead. 
Since all non-isolated components are treated the same, this is a
bounded size rule with $M=1$. The degree-based rule which we will
define in section 2.3 and which is the focus of this paper can be
thought of as a slight variant of the Bohman-Frieze process.

It is conjectured~\cite{Spencer:2007:BCG:1390104.1390105,Janson_phasetransitions}
that any bounded size rule which chooses between two edges results
in the continuous emergence of a giant component in the same universality
class as the standard {Erd\H{o}s-R\'{e}nyi}~\cite{ER} phase transition.
Note that the transition must be continuous
 due to the result
of Riordan and Warnke given in section 2.1, since any such
process is an Achlioptas process with finite choice.  If instead we choose $k(n)$
nodes (or edges) where $k(n)\rightarrow\infty$, it becomes easier
to analyze the nature of the phase transition in a large class of
bounded size rules. In the paragraph below we show that for a particularly
interesting process the transition is continuous
as the number of choices approaches infinity, and moreover is in the
same universality class as the {Erd\H{o}s-R\'{e}nyi} random
graph process. Intuitively, the transition should remain continuous
if we limit ourselves to fewer choices.

Consider the rule defined as follows. At each timestep $k(n)$ nodes
are chosen uniformly at random. If at least two vertices are contained in a component of size at most $M$, an edge is added between the two vertices contained in the smallest component amongst those chosen. If all nodes chosen are in components of size $M$ or greater, then two nodes amongst those chosen are linked uniformly at random. 
In this case, there will be some time $t_{0}n$ at which almost every
component is between sizes $M$ and $2M$. To see that this is true,
note that whenever some portion of the nodes are contained in components
of size $M$ or smaller, two components of size $M$ or smaller will
be merged. However, once almost all of the nodes are in components of size
at least $M$, then edges will in fact be added uniformly at random. For simplicity,
take $M$ to be a power of $2$ so that almost all nodes are
in components of size exactly $M$. Suppose that we merge all nodes in components of size $M$ together and treat this collection as one isolated node. From this
onward, the process has become the standard {Erd\H{o}s-R\'{e}nyi}
random graph process. If at any point we halt the process and ``unmerge'' the nodes we will end up with components which are precisely $M$ times as large. Since $M$ is a constant, it follows that a giant component will emerge continuously
at time $t_{0}n+\frac{n}{2M}$. 
It is possible that there are bounded
size rules which result in a discontinuous phase transition given
a sufficiently large number of choices, but this is unlikely given
the above example. The single process we have chosen to rigorously
examine seems to be the ``worst case'' in that it minimizes the
size of the component created at each individual timestep.

\subsection{A Degree-based Rule with Varying Choice}
\label{subsec:degreebased}

\begin{figure}
\resizebox{0.5\textwidth}{!}{
 \includegraphics{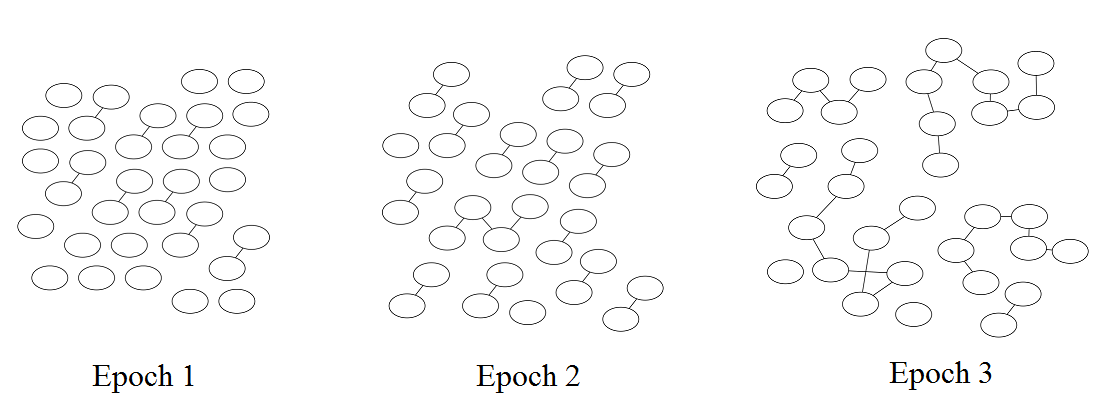}

}
\caption{Epochs of growth for the DRV model defined in Sec. 2.3, illustrating different phases of growth.  Epoch 1 is the timeframe $t\leq\frac{1}{4}n$, epoch 2 is $\frac{1}{4}n<t\leq\frac{1}{2}n$
and epoch 3 is $\frac{1}{2}n<t\leq\frac{3}{4}n$. As shown in the
figures above, in epoch 1 all nodes will be of degree 0 or 1, so that
almost all nodes are isolated or in components of size 2. In epoch
2 we start to see a very small number of degree 2 nodes which results
in a few 2-chains of degree 2 nodes terminated by degree 1 nodes.
In epoch 3 a fraction of the network will consist of 2-chains, formed
by linking two nodes of degree one. This continues into epoch 4 ($\frac{3}{4}n<t\leq n$),
with larger and larger 2-chains formed.}

\label{fig:2}      
\end{figure}

In this section we will first introduce a variant on the Bohman-Frieze model in which we choose edges from amongst $k(n)$ nodes, where $k(n)$ may vary with system size $n$. It is also stated in terms of node degree rather than component size. We then show this process has a continuous,  {Erd\H{o}s-R\'{e}nyii like phase transition. We will refer to this process as the Bohman Frieze process with {\it varying choice}.  Then, far more interesting, we show that by altering the process only to consider nodes of degree 1 in addition to isolated nodes of degree 0, we cause the giant component to emerge discontinuously. This further altered process will be referred to as the Degree Rule with Varying Choice, or DRV.

Consider the following generalization of the Bohman-Frieze process.
At each timestep $t$ choose a set $V_{t}$ of $k(n)$ vertices uniformly
at random, and let $V_{t,d}$ denote 
the set of vertices of degree $d$
in $V_{t}$. If $V_{t,0}$ contains at least two vertices then add
an edge between two vertices in $V_{t,0}$ sampled at random without
replacement. Otherwise, add an edge between two vertices
in $V_{t}$ sampled at random without replacement. Suppose $k(n)\rightarrow\infty$.
If $t=cn$ where $c<\frac{1}{2}$, then:

\begin{eqnarray}
\Pr(\lvert V_{t,0}\rvert\geq2) & = & 1-\Pr(\lvert V_{t,0}\rvert=0)-\Pr(\lvert V_{t,0}\rvert\geq1) \nonumber \\
 & = & 1-(1-\frac{D_{0}(t)}{n})^{k(n)}\nonumber \\
& & - k(n)(1-\frac{D_{0}(t)}{n})^{k(n)-1}\frac{D_{0}(t)}{n}\nonumber \\
 & = & 1-o(1).
\end{eqnarray}

It follows that almost every edge for $t<\frac{n}{2}$ is added between
two isolated nodes, so that $D_{0}(\frac{n}{2})=o(n)$. Therefore,
almost every edge added at times $t\geq\frac{n}{2}$ will be chosen
uniformly at random from the set of non-isolated nodes. If we restrict
our attention to the non-isolated nodes, the process evolves precisely
as the {Erd\H{o}s-R\'{e}nyi} process with the initial condition
$N_{2}(0)=\frac{n}{2}+o(n)$, where $N_k(t)$   is the number of components of size $k$ at time $t$.
It follows that a giant component emerges
continuously with the addition of $\frac{n}{4}$ edges, resulting
in a phase transition at time $t=\frac{3}{4}n$. This is removing
all nodes that were isolated at time $t=\frac{n}{2}$ from the graph,
and so it must be proven that if we add these nodes back in together with all
edges adjacent to them it does not change the nature of the phase
transition. However, note that all edges adjacent to these nodes must
have either linked a pair of nodes that were isolated at the time
of the addition or have been added between two non-isolated nodes
chosen uniformly at random. It follows that the effect on the size
of the giant component at any given time is less than if a sublinear
number of random edges were added to the graph. Therefore, the nature
and time of the transition have not changed.

Suppose that we further alter the process slightly. At
each timestep if $|V_{t,0}|<2$, then if $|V_{t,1}|\geq2$ we add
an edge between two  vertices in $V_{t,1}$ chosen at random. Otherwise we add
an edge between two random vertices in $V_{t}$. This process, which we refer to as the DRV process, at first behaves very similarly as illustrated in Fig. 2. During the first $\frac{n}{2}$ steps it is still almost always the case that two isolated nodes are linked. But afterwards the two processes are very different, with the emergence of the giant component delayed and occurring discontinuously in the DRV process. This will be proven in Sec. 3. 

A natural generalization is define $V_{t,i}$ for all integers $i>0$
and to prefer nodes of lower degree. In fact this process behaves
similarly to the degree-base process described in the previous
paragraph so long as $k(n)\rightarrow\infty$, and all of the results
in the following section can be applied to either model. In fact,
the two models are virtually identical for $k(n)\ge\log_{2}n$, in
the sense that a coupling argument reveals that there is vanishing
probability that they do not output the same graph. In a coupling
argument, we run both models using the same random ``coinflips''
at each step and calculate the probability that the two ever disagree
in the selection of a single edge. For more details, see~\cite{durrett2010probability}.
In the following section there is no need to distinguish between the
two models whenever $k(n)\geq\log_{2}n$.

However, in addition to linking nodes of degree 0 whenever possible,
we also link nodes of degree 1 if no nodes of degree 0 are chosen.
It is also similar to a global rule defined as follows. When $t<n(1-\frac{1}{k})$
merge two components chosen uniformly at random at each timestep.
When $t\geq n(1-\frac{1}{k(n)})$, link two nodes chosen uniformly
at random at each timestep. It is interesting that a rule depending
only on degree would behave similarly to a process which merges entire
components chosen uniformly at random. In the DRV process, the giant component emerges
discontinuously at time $n(1-\frac{1}{k(n)})$. Even though the transition may be considered
by some to be trivial because it occurs at $t=n-o(n)$. However, the
transition does not occur because we ``run out of components'' but
because the component size distribution becomes extremely heavy tailed,
resulting in a ``powder keg'' \cite{PhysRevLett.103.255701}. Regardless, we have greatly delayed the
onset of percolation using a simple degree-based rule.

\section{Results: Degree-based rule}
\label{sec:results}

The purpose of this section is to rigorously prove that the degree-based
rule defined in section 2.3 yields a discontinuous percolation transition
with a critical window of size $O(\frac{n}{k})$. In section 3.1,
we will put bounds on the resulting degree distribution and give a brief non-rigorous
argument to give intuition both for the discontinuity of the transition
and the size of the critical window. In sections 3.2 and 3.3, we will
complete the proof.

\subsection{Intuition for the Proof}

In this section, we will show that the degree-based process behaves
similarly to a hybrid of two simpler processes:

\begin{enumerate}

\item The standard {Erd\H{o}s-R\'{e}nyi} process, in which nodes are linked uniformly at random.
\item The process in which {\it components} are linked uniformly at random.

\end{enumerate}

Specifically, almost all steps in the process are described by one of the five events listed below (We explain the origin of the different events in the subsequent paragraphs.) Only a vanishing number of components
will have been involved in any event which is not equivalent to one
of these five. 
\begin{enumerate}
\item Two random isolated nodes are linked.
\item Two components are merged uniformly at random.
\item Two components are merged chosen proportional to component size minus
two.
\item Two components are merged, one of which is chosen uniformly at random
and the other chosen proportional to component size minus two(excluding components of size 1).

\end{enumerate}

In the DRV process, when $t \leq cn$ for any $c<\frac{1}{2}$ note that there are at least $(1-2c)n$ nodes of degree $0$. It follows with high probability an edge will be added between two nodes of degree $0$  at any individual timestep (see details in the next section) so that almost every edge to that point has been added between two isolated nodes. That is, event 1 dominates the process up until time $t=\frac{n}{2}$. Since almost all nodes are then in components of size 2, we may intuitively think of the process as ``starting over'' with the initial condition that all nodes are in components of size 2. It then follows for $t\geq \frac{n}{2}$ that almost every edge is added between two nodes of degree 1. If we assume that every single edge links two nodes of degree 1, the result is that every component has some number of nodes of degree 2 and exactly two nodes of degree 1. Since every component has precisely the same number of nodes of degree 1, it follows that whenever we link two nodes of degree one we are in fact merging two components uniformly at random. Hence event 2 dominates the process when $\frac{1}{2}n \leq t \leq cn$ for any $c<1$, since in this time period it is almost always the case that two random nodes of degree 1 are linked. 

It is well known \cite{Coalescense1999} that merging components uniformly at random does not result in the emergence of a giant component. Moreover, starting with the initial condition $N_2(\frac{n}{2}) = \frac{n}{2}$, the expected component distribution is $N_{2i}(t)= (1+\frac{t}{2})^{-2} (\frac{t}{2+t})^{i-1}n$ for $t\geq\frac{n}{2}$. As $t \rightarrow n$ the component distribution becomes a ``powder keg'' in which a portion of nodes is contained in large components. More precisely, there exist $cn$ nodes in components whose size diverges to $\infty$ as $n\rightarrow \infty$ for some constant $c$.    

As $t \rightarrow n$ the number of nodes of degree 1 dwindles, so we necessarily link some nodes of degree 2. If we link two nodes of degree 2, it is the same as merging two components chosen proportional to component size minus 2 if we assume that all components are still composed only of nodes of degrees 1 and 2. This is event 3. If we link a node of degree 1 to a node of degree 2, it is event 4. If $cn$ nodes are in components of size at least $f(n) = \omega(1)$, then $\theta(n/f(n))$ combined occurrences of events 3 and 4 are sufficient to result in the emergence of a giant component. 

The essence of the proof then, is to show that there are enough occurrences of event 2 to build up a powder keg without enough occurances of events 3 and 4 to defuse the powder keg before the giant component emerges. Moreover, the events above do not entirely describe the process. These technical details will be addressed in the following sections. However, by ignoring the technical details and considering only the four most likely events listed above, we can estimate that the critical point $t_c$ at which a giant component first emerges is located at $(1-\theta(\frac{1}{k(n)}))n$. As described in the previous paragraph, as the process proceeds the number of nodes of degree 1 dwindles and almost all nodes will be of degree 2, resulting in long "chains" of nodes of degree 2. It is only at time $(1-\theta(\frac{1}{k(n)}))n$ that we start to see non-vanishing numbers of higher degree nodes resulting from occurrences of events 3 and 4. Moreover, there will be a "powder keg" of components of size $\Omega(k)$ due to occurances of event 2. This results in the estimate $t_c = (1-\theta(\frac{1}{k(n)}))n$. See Fig. 6 and Fig. 7 for numerical evidence that this non-rigorous calculation is correct. This, in conjunction with Theorem 2, also shows that the size of the ``critical window" in which the largest component grows from size $\epsilon n$ to $(1-\epsilon)n$ is of size $\theta(\frac{n}{k})$.  

\subsection{Degree Distribution}

\begin{figure}
\resizebox{0.5\textwidth}{!}{
 \includegraphics{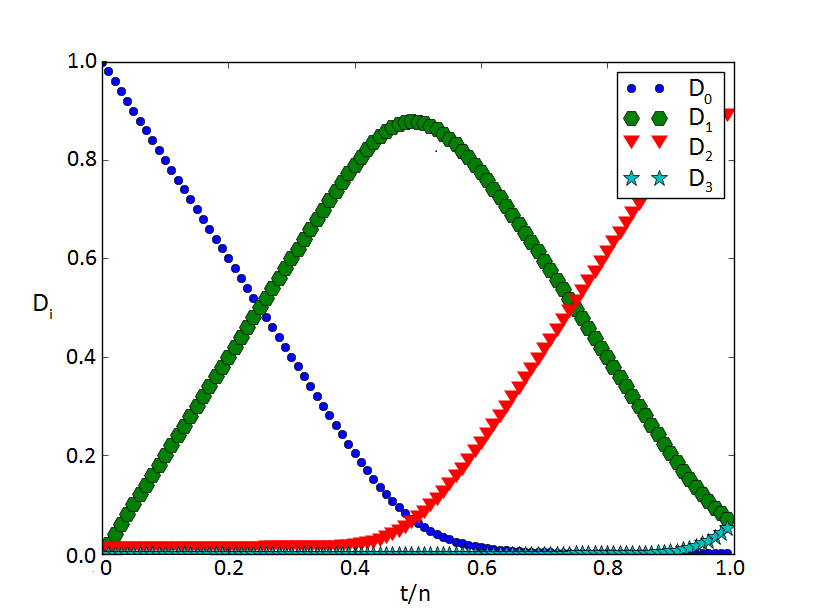}
}
\caption{Evolution of the degree distribution from $t=0$ to $t=n$ for the DRV process defined in Sec.~\ref{subsec:degreebased} with $k=\log_2 n$.}
\label{fig:3}      
\end{figure}

In this section we will approximate the degree distribution of the
process described in the previous section. Let $D_{i}(t)$ be the
number of nodes of degree $i$ at time $t$, and $p_{i}(t)=\frac{\langle D_{i}\rangle}{n}$.
Then:

\[
\langle D_{0}(t+1)\rangle-\langle D_{0}(t)\rangle=-2+2((1-p_{0}){}^{k})-(k(1-p_{0}){}^{k-1}p_{0})
\]

Given the initial condition $D_{0}(t)=n$, it easily follows that
$\langle D_{0}(t)\rangle=n-2t+O((\frac{2t}{n})^{k+1}\frac{n}{k})$
when $t<cn$ for any $c<\frac{1}{2}$. See the Appendix for a rigorous proof.
In order to see how many nodes of higher degrees there are, the following
relations will be useful. Note that (4) follows because the total
number of nodes in the graph is $n$, and (5) follows because the
sum of the degrees over all nodes is twice the number of edges in
the graph.

\begin{equation}
\sum_{i=0}^{\infty}\langle D_{i}(t)\rangle=n
\end{equation}

\begin{equation}
\sum_{i=1}^{\infty}i\langle D_{i}(t)\rangle=2t
\end{equation}

Subtracting (1) from (2) we see that $\sum_{i=2}^{\infty}(i-1)\langle D_{i}(t)\rangle=\langle D_{0}(t)\rangle+2t-n$.
Hence in particular we have that w.h.p. $\sum_{i=2}^{\infty}(i-1)\langle D_{i}(\frac{n}{2})\rangle=\langle D_{0}(\frac{n}{2})\rangle=O(\frac{n}{k})$.
It follows that $\langle D_{1}(\frac{n}{2})\rangle=n-O(\frac{n}{k})$,
so that at time $t=\frac{n}{2}$ almost all nodes are in components
of size $2$. Hence the process essentially ``starts over'' at time
$t=\frac{n}{2}$. In symbols, if $c<\frac{1}{2}$ then for $i\neq1$:

\begin{equation}
p_{1}(\frac{n}{2}+cn)=p_{0}(cn)+o(1)
\end{equation}

Therefore, an almost identical argument will show that for $t=\frac{n}{2}+cn$
where $c\leq\frac{1}{2}$, $\langle D_{1}(t)\rangle=n-2cn+O(\frac{n}{k})$.
Using relations (1) and (2) at time $t=n$ and subtracting twice the
value of (1) from (2), we obtain:

\begin{equation}
2\langle D_{0}(n)\rangle+\langle D_{1}(n)\rangle  =  \sum_{i=3}^{\infty}(i-2)\langle D_{i}(n)\rangle
\end{equation}

Since $\langle D_{0}(n)\rangle$ is insignificant, it follows that
$\langle D_{1}(n)\rangle\approx\sum_{i=3}^{\infty}(i-2)\langle D_{i}(n)\rangle$.
Moreover, if $k(n)\geq\log_{2}n$ then w.h.p. there will be no nodes
of degree 3 or higher, so that $\langle D_{1}(n)\rangle\approx\langle D_{3}(n)\rangle$.
This approximate symmetry can be clearly observed in Fig. 3. Table 1 summarizes the results of this section, assuming $k(n)\geq\log_{2}n$.

\begin{table}
\caption{Number of nodes of each degree at the end of each of the four epochs.}
\label{tab:1}       
\begin{tabular}{|c|c|c|c|c|c|}
\hline\noalign{\smallskip}
t \textbackslash Degree & 0 & 1 & 2 & 3   \\
\noalign{\smallskip}\hline\noalign{\smallskip}
$\frac{n}{4}$ & $\frac{1}{2}n+o(n)$ & $\frac{1}{2}n+o(n)$ & 0 & 0  \\
$\frac{n}{2}$ & $O(\frac{n}{k})$ & $n-O(\frac{n}{k})$ & $O(\frac{n}{k})$ & 0 \\
$\frac{3}{4}n$ & $O(\frac{n}{k})$ & $\frac{1}{2}n+o(n)$ & $\frac{1}{2}n+o(n)$ & 0 \\
$n$ & $O(\frac{n}{k})$ & $O(\frac{n}{k})$ & $n-O(\frac{n}{k})$ & $O(\frac{n}{k})$ \\
\noalign{\smallskip}\hline
\end{tabular}
\end{table}

\subsection{Discontinuity of the Phase Transition}

\begin{figure}
\resizebox{0.5\textwidth}{!}{
 \includegraphics{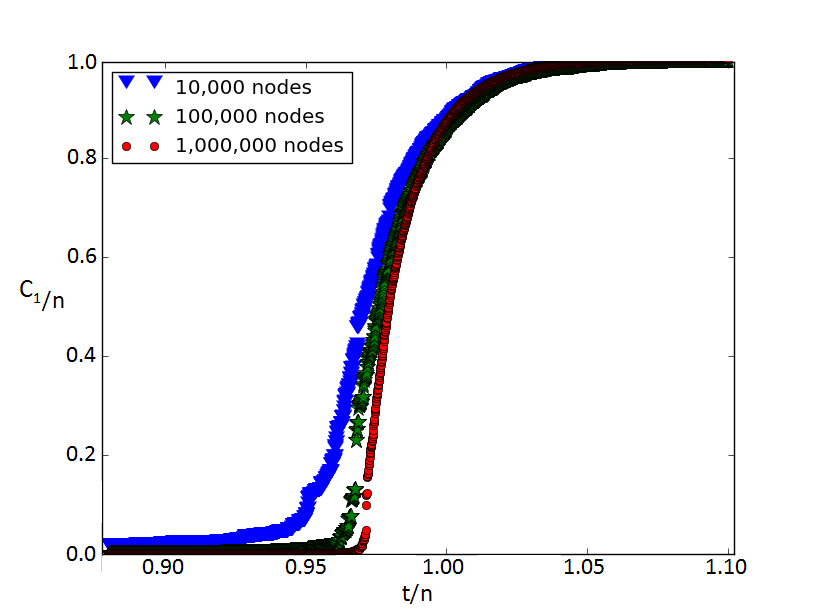}
}
\caption{The DRV process with $k(n)=\log_{2}n$. The transition shown here at $t=n-\theta(\frac{n}{k})$ will be shown to be discontinuous in the thermodynamic limit. Due to the lack of large, discrete jumps it may seem that the transition is continuous. Note that the size of the critical
window appears to shrink slightly as $n$ increases, consistent with discontinuity.}
\label{fig:4}      
\end{figure}

\begin{figure}
\resizebox{0.5\textwidth}{!}{
 \includegraphics{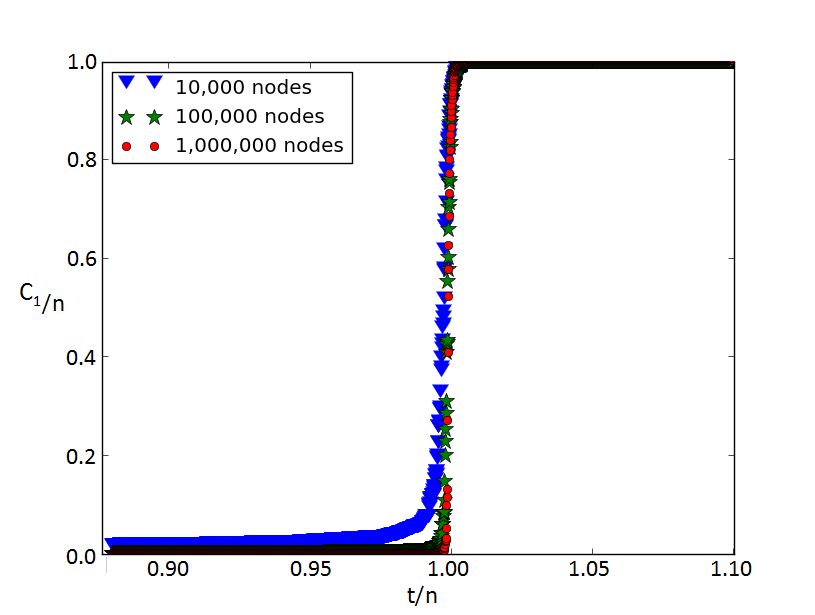}
}
\caption{ The DRV process with $k(n)=(\log_{2}n)^{2}$.A discontinuous phase transition at $t=n-\theta(\frac{n}{k})$ for
$k=(\log_{2}n)^{2}$. The transition seems much more obviously discontinuous,
but the size of the critical window has only shrunk by a factor of
$\log_{2}n$.}
\label{fig:5}      
\end{figure}

\begin{figure}
\resizebox{0.5\textwidth}{!}{
 \includegraphics{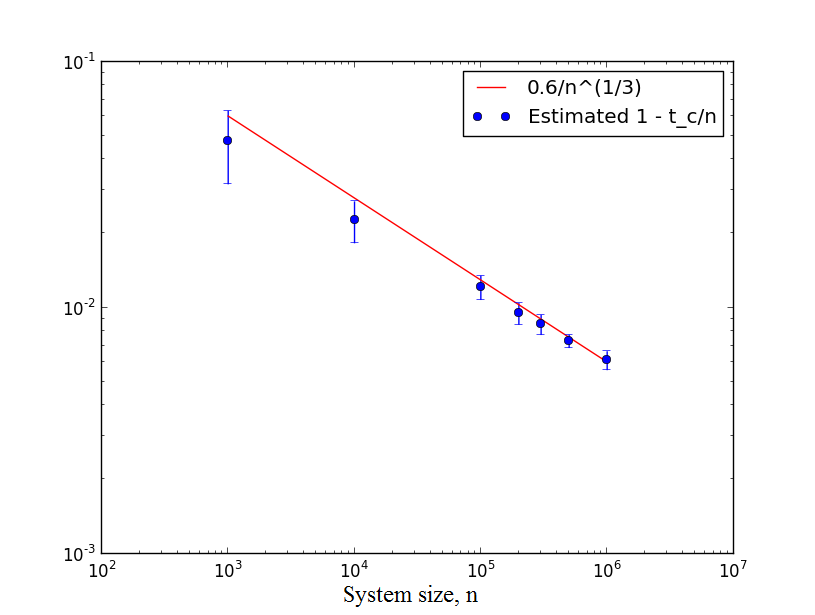}
}
\caption{Estimated critical points calculated via simulation of the DRV process with $k(n)  = n^{\frac{1}{3}}$, averaged over 30 trials, plotted against the curve $0.6*n^{-\frac{1}{3}}$. This shows that the critical points decay as a power law with exponent $\frac{1}{3}$, which is consistent with $t_c = (1 - \theta(\frac{1}{k}))n$. Smaller system sizes are included to show finite size effects and the increasing agreement with the curve for larger systems.} 
\label{fig:6}      
\end{figure}

\begin{figure}
\resizebox{0.5\textwidth}{!}{
 \includegraphics{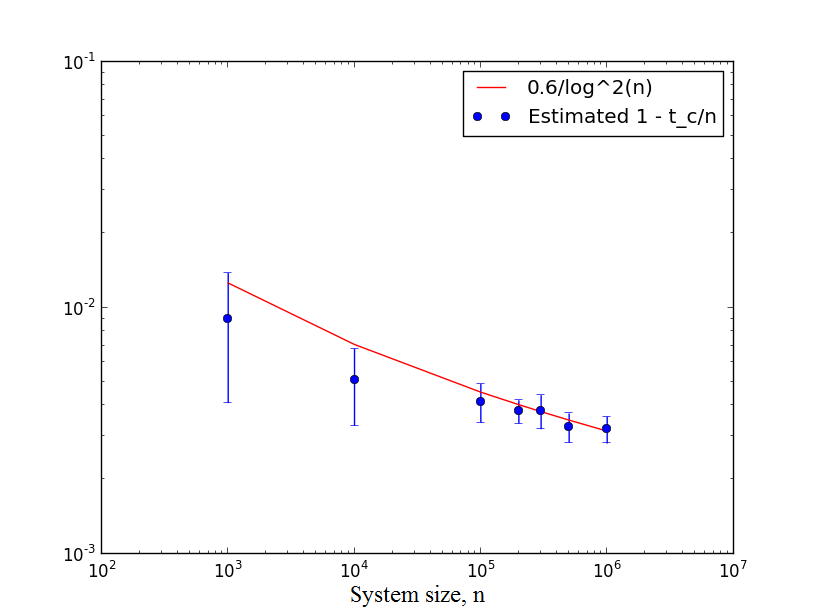}
}
\caption{Estimated critical points calculated via simulation of the DRV process with $k(n)  = \log^2(n)$, averaged over 10 trials, plotted against the curve $\frac{0.6}{\log^2(n)}$. The critical points no longer decay as a power law, so it is not true in general that $1-\frac{t_c}{n}$ decays as a power law, but the decay remains consistent with $t_c = (1 - \theta(\frac{1}{k}))n.$ Note that the constant 0.6 agrees with the constant in Fig. 6. } 
\label{fig:7}      
\end{figure}

In this section we will show that the giant component emerges discontinuously. Fig. 4 and Fig. 5 show that the emergence of a giant component is sudden and occurs near $t = n$, but cannot show whether these transitions are continuous or discontinuous in the thermodynamic limit. 
Recall that we define the critical window for each fixed $\epsilon$ as the interval of density in which $\epsilon n \leq C_1\leq (1-\epsilon) n$. The purpose of Theorem 1 is to put an upper bound on the rightmost point of the critical window, and in particular to show at at time $t=n$ a giant component has not yet emerged. In order to show that that the giant component emerges discontinuously it is then sufficient to show that a giant component has emerged at some time $t=n+g(n)$ where $g$ is some sublinear function of $n$. This is proven as Theorem 2. 

For convenience, we define $(i,j)$ edges as those edges added between
nodes of degree i and j respectively at the time the edge was added.
For example, the very first edge added is necessarily a $(0,0)$ edge since all nodes initially have degree $0$.
Additionally, edges will be colored. All $(1,1)$ links added at any
time $t\geq\frac{n}{2}$ and $(0,0)$ edges added at any time $t\leq\frac{n}{2}$
will be colored blue. All $(1,1)$ and $(0,1)$ links added before
time $t=\frac{n}{2}$ are colored green, and all other links are colored
red. The blue component $C^{B}(x)$ containing the node $x$ is the
set of nodes reachable from $x$ by following blue links together
with the node $x$ itself. Similarly, $N_{i}^{B}(t)$ is the number
of blue components of size $i$ at time $t$. Table 2 contains an explicit list of the edge color meanings.

%
\begin{table}
\caption{Edge colorings used in proofs.}
\label{tab:1}       
\begin{tabular}{|c|c|c|}
\hline\noalign{\smallskip}
$(i,j)$\textbackslash{}t & Before $t=\frac{n}{2}$ & After $t=\frac{n}{2}$  \\
\noalign{\smallskip}\hline\noalign{\smallskip}
$(0,0)$ & Blue & Red \\
$(0,1)$ & Green & Red \\
$(1,1)$ & Green & Blue \\
Other & Red & Red \\
\noalign{\smallskip}\hline
\end{tabular}
\end{table}

Intuitively, our degree-based process is very similar to one in which
we first merge all isolated nodes into components of size 2, merge
components uniformly at random up until time $t=(1-\frac{1}{k})n$
or so, and then start linking nodes uniformly at random as in the
typical {Erd\H{o}s-R\'{e}nyi} process. It is easy to show that
a process defined this way results in a discontinuous transition.
In order to rigorously prove that the same happens in the degree-based
process, we utilize colored edges in order to simplify certain complications.
The blue edges roughly correspond to the edges which agree precisely
with the first two steps outlined above: first almost all isolated
nodes are merged into components of size 2, and afterwards components
are linked uniformly at random. Green edges are early deviations in
which either isolated node are linked to non-isolated nodes or two
random non-isolated components are linked before almost all isolated
nodes have been linked into components of size 2. Finally, red edges
are non-blue edges added when almost no isolated nodes remain. We may think intuitively of red edges as being similar to edges added between two random nodes as in the {Erd\H{o}s-R\'{e}nyi} process. 
\begin{theorem}
If $k(n)=\Omega(\log n$) in the DRV process, then no giant component exists at time
$t=dn$ for any $d<1$.\end{theorem}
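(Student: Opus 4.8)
\emph{Proof proposal.} The plan is to carve out of the DRV graph the ``blue backbone'' defined by the coloring introduced above, argue that on the interval $(n/2,dn]$ this backbone evolves essentially as the uniform coalescent — merging components uniformly at random, which is sub-critical — and then check that the non-blue edges present at time $dn$ are too few to build a giant component on top of it. First I would pin down the structure at time $n/2$: using the degree estimates of Section~\ref{sec:results}, in particular $D_0(n/2)=O(n/k)$ and $D_1(n/2)=n-O(n/k)$ w.h.p., together with $\sum_i D_i=n$ and $\sum_i iD_i=2t$, one bounds the number of non-blue (green or red) edges added before time $n/2$ by $O(n/k)$ — a $(0,0)$ edge destroys two degree-$0$ vertices while every other edge destroys at most one, so $\Omega(n/k)$ non-blue edges before $n/2$ would be incompatible with $D_0(n/2)=O(n/k)$. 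Hence all but $O(n/k)$ vertices lie at time $n/2$ in a blue component that is a single edge, and the blue graph is then $\tfrac n2(1+o(1))$ disjoint two-vertex paths together with $O(n/k)$ leftover structure.

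Next I would control the interval $(n/2,dn]$. An edge addition lowers $D_1$ by at most $2$, and only a $(1,1)$ edge does so, hence $D_1(t)\ge D_1(n/2)-2(t-n/2)\ge 2(1-d)n-o(n)=\Omega(n)$ for all $t\le dn$; therefore $\Pr(|V_{t,1}|<2)$ is exponentially small in $k$, and since $k=\Omega(\log n)$ the expected number of steps in $(n/2,dn]$ forced to add a non-$(1,1)$ edge is $O(n^{1-\delta})$ for some $\delta=\delta(d)>0$. With the at most $D_0(n/2)=O(n/k)$ further edges touching a still-isolated vertex, Markov's inequality gives that w.h.p.\ the number of red edges added after time $n/2$ is $O(n/k)+O(n^{1-\delta})$. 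Moreover a blue component is assembled only from $(1,1)$ edges joining degree-$1$ endpoints, so it stays a path with exactly two degree-$1$ vertices, apart from the $O(n/k)$ components carrying leftover structure from time $n/2$, the $O(\mathrm{polylog}\,n)$ components where a $(1,1)$ edge joined two degree-$1$ vertices of the same component, and the $o(n)$ components an endpoint of which has since acquired a red edge. Conditioned on $|V_{t,1}|\ge2$ the linked pair is a uniformly random pair of degree-$1$ vertices, so modulo those $o(n)$ exceptions each blue edge merges two uniformly random blue components; thus on $(n/2,dn]$ the blue graph runs the uniform coalescent started from $\tfrac n2(1+o(1))$ blocks for at most $(d-\tfrac12)n$ steps, leaving $\Omega(n)$ blocks. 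By the known behavior of that process (cf.\ \cite{Coalescense1999}; equivalently the geometric size profile $N_{2i}\propto r^{i-1}$ with $r=r(d)$ bounded away from $1$), w.h.p.\ every blue component has size $O(\log n)$ at time $dn$.

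Finally I would reinsert the non-blue edges. Contract each blue component to a super-node, obtaining a graph $G^{\star}$ on $\Theta(n)$ super-nodes carrying only the non-blue edges — at most $O(n/k)+O(n^{1-\delta})=o(n)$ of them since $k\to\infty$ — with super-node weights (the blue-component sizes) bounded by $O(\log n)$. Because $|E(G^{\star})|=o(|V(G^{\star})|)$ and the weights are bounded, a standard sub-critical random-graph estimate gives that the largest component of $G^{\star}$ contains $O(\log n)$ super-nodes w.h.p., so the largest component of the DRV graph at time $dn$ has $O(\log^{2}n)=o(n)$ vertices. Hence no giant component exists at time $dn$, and since $d<1$ was arbitrary the theorem follows.

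The delicate step is the middle one, making ``blue backbone $=$ uniform coalescent'' rigorous: one must simultaneously control the $o(n)$ blue components contaminated by red edges or inherited from time $n/2$, the rare cycle-forming $(1,1)$ edges, and the fact that two degree-$1$ vertices drawn from a random $k$-set are only approximately uniform among pairs of blue components, and then invoke a concentration (not merely in-expectation) bound on the largest block of the uniform coalescent with $\Omega(n)$ blocks remaining. Everything else is bookkeeping with the degree recursions of Section~\ref{sec:results} and standard sub-critical random-graph estimates.
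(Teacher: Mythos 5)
Your overall strategy mirrors the paper's: isolate a blue backbone that evolves as uniform coalescence, establish a geometric tail on blue component sizes at time $dn$, and then argue the sparse non-blue edges cannot assemble a giant component on top. The first two stages of your argument (the degree bookkeeping at $t=n/2$, the lower bound $D_1(t)\ge 2(1-d)n-o(n)$ giving exponentially small per-step probability of a non-$(1,1)$ addition, and the geometric profile of the coalescent with $\Omega(n)$ blocks remaining) all track the paper closely and are sound modulo the concentration details you yourself flag.

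The genuine gap is in the last step. You contract blue components to super-nodes, note $|E(G^\star)|=o\left(|V(G^\star)|\right)$ with weights $O(\log n)$, and invoke ``a standard sub-critical random-graph estimate'' to conclude the largest $G^\star$-component has $O(\log n)$ super-nodes. This does not follow from edge-count alone: the non-blue edges are not placed uniformly over super-nodes. A red $(2,2)$ edge selects endpoints uniformly among degree-$2$ nodes, so its two ends land in blue components with probability proportional to the number of degree-$2$ vertices they contain, i.e.\ with essentially linear size-bias. Under adversarial (or sufficiently size-biased) placement, $o(n)$ edges can all pile into a single component of $o(n)$ super-nodes; and since the edge count is $O(n/k)$ while blue components can be as large as $\Theta(\log n)$, the crude worst-case product is $O\!\left(\frac{n}{k}\log n\right)$, which is $\Theta(n)$ when $k=\Theta(\log n)$ --- exactly the borderline case the theorem must cover. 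So without an argument that the size-bias is harmless, the subcriticality of $G^\star$ is not established.

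The paper closes this gap by arguing directly about expectations rather than about $G^\star$-component sizes: it bounds the expected size of the \emph{pair} of blue components joined by a red $(2,2)$ edge by $\sum_{i,j}(i+j)\frac{i\,N^B_i\,j\,N^B_j}{D_2^2}$, which is $O(1)$ because the blue size distribution is geometric (so a single size-biased draw still has bounded second moment) and $D_2=\Theta(n)$. Multiplying this $O(1)$ expectation by the $O(n/k)$ red edges gives $O(n/k)=o(n)$ total vertices in blue components carrying red edges; the same is done for green edges. That bound is strong enough that \emph{even if all such components merged into one} there would be no giant component --- no subcriticality estimate on $G^\star$ is needed. To repair your version, you would need to explicitly exploit the geometric tail to show that the size-biased branching factor of $G^\star$ is $o(1)$ (it works out to $O(\langle s^2\rangle/(k\langle s\rangle))=O(1/k)$), rather than treating $G^\star$ as if its edges were uniform; as written, the ``standard estimate'' step is not justified.

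Two smaller points: you should also account for the rare steps after $n/2$ where $|V_{t,0}|\ge 2$, which add red $(0,0)$ edges with per-step probability that is $\Theta(1)$ (not exponentially small in $k$) while $D_0=\Theta(n/k)$; the paper handles these by observing each such edge consumes at least one of the $O(n/k)$ remaining degree-$0$ vertices, a point you mention only in passing. And the assertion that there are no red $(i,j)$ edges with $\max(i,j)>2$ at time $dn$, which the paper uses to limit the cases in the expected-size computation, deserves an explicit line invoking $k\ge\log_2 n$.
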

\begin{proof}
We will first show that distribution of the blue component sizes has
an exponential tail at any time $t=dn$ for any $d<1$ with the largest
component logarithmic in size. We will then show that the number of nodes in blue components with adjacent red or green edges is sublinear, so that all such nodes can be discarded without affecting the existence of a (non-blue) giant component. After discarding these nodes, the definition of blue component becomes identical to the traditional graph theoretic definition of a component, and hence no giant component exists.  

By lemma 1, the number of red and green edges added at time $dn$ is sublinear for any $d<1$, so almost all blue components will have no red or green edges. Moreover, if we discard all blue components with red or green edges the proof is simple, because the process will proceed as follows:

\begin{enumerate}
\item Up until time $t=\frac{n}{2}$
  we link nodes in blue components of size 1 at each step, so that almost all nodes are in blue components of size 2. 
\item From time $t=\frac{n}{2}$ until time $t=cn$
  we always link two random degree 1 nodes. Note that all remaining blue components with more than one node have exactly two nodes of degree 1, and components of size 1 are not involved in any mergings, so this is exactly the same as merging two random blue components of size greater than 1. 
\end{enumerate}

The proof then follows by considering the well-understood process in which at each step two components are linked uniformly at random, with the minor variation that we initialize with all components of size 2. This merely speeds up the process by a factor of 2 and multiplies all component sizes by 2. Hence $\langle N_{2i}^{B}(cn)\rangle=(c(1-c)^{2}c^{2i}n+o(n)$
 , with the $o(n)$
  accounting for the the sublinear number of blue components which have adjacent red or green edges. 

Although red and green edges do not have any effect on the blue component structure, they change node degrees and hence can alter the placement of additional blue edges. For example, if a red $(0,0)$ edge is added after time $\frac{n}{2}$
 , both of the linked nodes remain blue components of size 1, but have degree 1 and hence may afterwards be chosen as an endpoint of a $(1,1)$ blue edge. A green $(1,1)$ edge added before time $\frac{n}{2}$ will reduce the number of degree 1 nodes in two different blue components by one. However, components with green and red edges will, if anything, have fewer nodes of degree the number of degree one nodes in a blue component may only be increased by the addition of a red or green edge if the component is of size one, and in all other cases number of degree 1 nodes in a blue component is either decreased or unaffected. In the case where the number of degree 1 nodes is decreased, further mergings become less likely, and it follows that the tail of the component size distribution of blue components with adjacent red or green edges is lighter than that of blue components without adjacent red or green edges. Moreover, almost all nodes are contained in blue components without red or green adjacent edges, because the total number of such edges is sublinear and the average blue component size is finite.  Therefore, no giant blue component exists and the distribution on blue component size decays exponentially at any time $t = dn$ for $d < 1$. 

Any non $(0,0)$ link that is added up to time $t=\frac{n}{2}$
is colored green. It follows from lemma 1 that at time $t=\frac{n}{2}$ the number
of nodes of degree 1 in components with at least one green edge is
$O(\frac{n}{k})$. The number of nodes of degree 1 in this collection
cannot increase through the addition of red or blue edges, and no more green
edges will be added. It follows w.h.p. that throughout the process the number of nodes of degree
1 in components with at least one green edge is $O(\frac{n}{k})$.  Therefore, the number of times a blue component without adjacent green edges is merged to a blue component with green edges is w.h.p. sublinear. Moreover, merging two blue components with adjacent green edges does not change the number of nodes contained in such components. Finally, since the distribution on blue components is exponential and hence the expected size of a blue component is finite, it follows that the total number of nodes contained in blue components with adjacent green edges is w.h.p. sublinear. If the number of nodes in blue components with adjacent red links is also sublinear, we can simply discard all nodes in components with red or green links. This does not affect the link structure of the remaining nodes since there are no red or green links among them, and in particular does not affect the existence of a giant component. 

Since the number of red edges added by time $dn$ is sublinear w.h.p. by lemma 1, it is sufficient to show that the expected component size of a blue component containing one or more red edges at time $dn$ is finite.  In this case, the number of nodes in blue components with one or more adjacent red edges is w.h.p. sublinear, so that even merging all such components together does not result in a giant component. 

The probability that a red (2,2) edge connects blue components of size $i$ and $j$ at time $t$ is bounded by \\ $\frac{iN_i^B(t) jN_j^B(t)}{D_2(t)^2}$ since a blue component of size i has at most $i$ degree $2$ nodes. The expected size of a pair of blue components linked by a red (2,2) edge at time $t$ is bounded by $\sum_{i,j=1}^n (i+j)\frac{iN_i^B(t)jN_j^B(t)}{D_2(t)^2}$. Since the number of degree $2$ nodes is w.h.p. linear at time $dn$ and the distribution on blue component size decays exponentially, this sum is finite. It follows the expected size of a blue component with adjacent red edges is finite as well. We may similarly bound the expected blue component size with adjacent red (0,0),(0,1),(0,2), and (1,2) edges. With high probability there are no red (i,j) edges for i or j greater than 2 at time $dn$, which follows from the assumption that $k(n)\geq \log_2(n)$.  

Since blue components without any adjacent red or green edges are components in the traditional sense, and the number of nodes in components with red or green edges is sublinear, it follows that w.h.p. no giant component exists at time $t=dn$. 
\end{proof}
In order to show that a giant component emerges discontinuously, it remains to show only that a giant component emerges at some time
$t=n+o(n)$, which follows from a straightforward proof by contradiction. 
\begin{theorem}
If $\frac{n}{k(n)}=\omega(\log n)$ in the DRV process, then for any $\epsilon>0$ there
exists some constant $B$ such that $C_{1}(n+B\frac{n}{k}))\geq(1-\epsilon)n$.\end{theorem}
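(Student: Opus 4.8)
The plan is to establish the ``powder keg'' in the component distribution just below $t=n$ and then reduce the dynamics of the next $\Theta(n/k)$ steps to the {Erd\H{o}s-R\'{e}nyi} giant-component threshold on a coarsened graph, arguing (as the text anticipates) by contradiction: if $C_1$ stayed below $(1-\epsilon)n$ on the whole interval $[t_0,n+Bn/k]$ for suitable $t_0$ and large $B$, then the merging there would be fast enough that a giant must form anyway. First I would fix a small constant $\delta>0$, put $t_0=(1-\delta/k)n$, and assemble the facts already in hand. From Section~3.2, $D_0(n/2)=O(n/k)$, and since degrees never decrease $D_0(t)=O(n/k)$ for all $t\ge n/2$; for $t\in[n/2,t_0]$ almost every edge is a $(1,1)$ edge, i.e.\ a merge of two components chosen uniformly at random, so the number of components is $\approx n-t$ and $D_1(t)\approx 2(n-t)$, giving $D_0(t_0)+D_1(t_0)=O(\delta\,n/k)$ and $M=\Theta(n/k)$ components at $t_0$. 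Combining the coalescent cluster-size profile $\langle N_{2i}^B(cn)\rangle=(c(1-c)^2c^{2i-2})n+o(n)$ from the proof of Theorem~1 (a positive fraction of whose mass sits in clusters of typical size $\Theta(k)$) with Theorem~1 itself (every component has size $o(n)$), I would obtain the powder keg: for any $\eta>0$ one may pick $\delta$ and a scale $c_2=c_2(\eta)$ so that w.h.p.\ at least $(1-\eta)n$ nodes lie in ``big'' components of size in $[c_2k,\,o(n)]$, and there are $m_{\mathrm{big}}\le n/(c_2k)=\Theta(n/k)$ of these.

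Next I would analyze $I=[t_0,\,n+Bn/k]$ \emph{under the contradiction hypothesis} that $C_1(t)<(1-\epsilon)n$ throughout $I$. On $I$ one still has $D_0+D_1=O(n/k)$: $D_0$ only falls, and degree-$1$ nodes are created only by the at most $O(n/k)$ available $(0,0)$ edges; meanwhile $D_2=n-o(n)$ on $I$. Hence a positive constant fraction of the $\Omega(Bn/k)$ steps in $I$ add a $(2,2)$ edge, each of which picks two degree-$2$ nodes essentially uniformly, so merges two components chosen up to constant factors proportionally to size; such an edge is a self-loop only with probability at most $\sum_A(|A|/n)^2\le \max_A|A|/n\le 1-\epsilon$, so a positive fraction of these $(2,2)$ edges genuinely merge two distinct components. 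Thus w.h.p.\ $\Omega(Bn/k)$ of the steps in $I$ are ``good.'' Contracting each of the $M$ components present at $t_0$ to a super-node weighted by its size, the good edges place $\Omega(Bn/k)=\Omega(B\,m_{\mathrm{big}})$ size-biased random edges onto these $M$ super-nodes, dominating a multiplicative-coalescent / weighted {Erd\H{o}s-R\'{e}nyi} process.

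Now I would cross the threshold. Once the number of edges on the super-node graph exceeds a large constant times the number of super-nodes, w.h.p.\ a single blob contains all but an $e^{-\Omega(B)}$ fraction of them; since endpoints are size-biased and ``has large size'' is positively correlated with ``lies in the blob,'' a rearrangement (FKG-type) inequality shows this blob carries at least $(1-e^{-\Omega(Bc_2)})(1-\eta)n$ of the mass of the big components. Choosing $\eta$ small and then $B$ large in terms of $\epsilon$ makes this exceed $(1-\epsilon)n$; but the blob is one connected component of the actual graph, contradicting $C_1<(1-\epsilon)n$ on $I$. Hence $C_1(n+Bn/k)\ge(1-\epsilon)n$ w.h.p.\ for this $B$. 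The hypothesis $n/k=\omega(\log n)$ enters precisely here: it forces $m_{\mathrm{big}}=\omega(\log n)$, so the {Erd\H{o}s-R\'{e}nyi} giant exists w.h.p., and it gives the concentration room needed to union-bound, over the $\Theta(n/k)$ steps and super-nodes, the statements that the degree counts stay put throughout $I$, that a fixed fraction of steps are good, and that the blob's mass is as claimed.

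The main obstacle is the control of the process on $I$: one must keep $D_0+D_1=O(n/k)$ and $D_2=n-o(n)$ with high probability for $\omega(\log n)$ consecutive steps (an Azuma/martingale argument, which is exactly what $n/k=\omega(\log n)$ pays for), rule out self-loops while the giant is still only moderately large, and verify that degree-$2$ selection really is size-biased up to constants despite the $O(n/k)$ total of degree-$\ge 3$ vertices that accumulate. The final reduction to {Erd\H{o}s-R\'{e}nyi} is conceptually routine, but it does require the size-biased version of the giant-component result together with the correlation inequality to pass from ``fraction of super-nodes'' to ``fraction of mass.''
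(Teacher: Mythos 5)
Your proposal takes a genuinely different --- and much heavier --- route than the paper's, and the additional machinery you invoke is both unnecessary and not fully justified.

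The paper's proof of Theorem~2 never needs the powder keg, the coarsened weighted {Erd\H{o}s-R\'{e}nyi} graph, or the FKG-type transfer from ``fraction of super-nodes'' to ``fraction of mass.'' It uses only two facts: (i) Lemma~2 gives $N(n)=O(n/k)$ components at time $n$; and (ii) under the contradiction hypothesis $C_1<(1-\epsilon)n$ throughout $[n,\,n+BN(n)]$, each step merges two \emph{distinct} components with probability at least $\epsilon-o(1)$ (the probability a newly added edge lands inside one existing component is $\sum_i(|C_i|/n)^2\le\max_i|C_i|/n\le 1-\epsilon$). Since merging $l$ components into one takes at most $l-1$ merges, a Chernoff bound shows that for $B>1/\epsilon$ the number of merges in $[n,n+BN(n)]$ exceeds $N(n)$ w.h.p.\ (using $N(n)=\omega(1)$, the $O(1)$ case being trivial), so the graph becomes fully connected --- contradiction. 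No reduction to a random-graph threshold, no control of the component-size profile, no size-biasing analysis.

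The key observation you missed is that you never have to produce a giant component \emph{indirectly} via an ER threshold: you can drive the component count all the way to $1$. Once you adopt ``count merges'' instead of ``detect a giant,'' the whole powder-keg/super-node scaffolding disappears, and so does the need to control $D_0,D_1,D_2$ over the interval $I$ via martingales. Beyond the extra effort, your version also rests on two steps you flag but do not close: the giant-component threshold you cite is for unweighted {Erd\H{o}s-R\'{e}nyi}, whereas your super-node graph has size-biased (Chung-Lu / multiplicative-coalescent type) edges whose threshold and giant fraction depend on the weight distribution; and the ``rearrangement (FKG-type)'' step passing from a $1-e^{-\Omega(B)}$ fraction of super-nodes to a $1-e^{-\Omega(Bc_2)}$ fraction of mass is asserted rather than proved and is the genuinely delicate point of that route. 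These are real gaps, not just technicalities, and they are exactly what the paper's component-counting argument sidesteps.
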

\begin{proof}
Given a set $C$ of $l$ components, if we merge two components in
$C$ and then successively merge two components $l$ times, the result
will be a totally connected graph. It follows that if there are a
sublinear number of components at step $t$ and a positive probability
at each timestep of merging two distinct components, then in a sublinear
number of timesteps the graph will be fully connected.

Suppose that at step $t=n+BN(n),$ $|C_{1}|<(1-\epsilon)n$ for any
fixed $B$, where $N(n)$ is the number of distinct components at
time $n$. It follows that the probability of merging two distinct
components is at least $\epsilon$ from any time $n\leq x<n+BN(n)$.
Let $M$ be the total number of mergings when $n\leq x<n+BN(n)$.
Then $M$ stochastically dominates the Binomial random variable $M'\sim Bin(\Delta,\epsilon)$.
Take $\Delta=BN(n)$. Then a Chernoff bound gives:

\begin{eqnarray}
\Pr(M'<BN(n)\epsilon-t)\leq e^{-\frac{t^{2}}{2(BN(n)\epsilon+t)}} \nonumber \\
\end{eqnarray}

It follows that for $B>\frac{1}{\epsilon}$, w.h.p. the graph is fully
connected so long as $N(n)=\omega(1)$. But if $N(n)=O(1)$ then the
result is trivial, so we may assume that $N(n)=\omega(1)$. Therefore,
we have a contradiction, and hence there exists some $B$ such that
$C_{1}(n+BN(n)))\geq(1-\epsilon)n$. All that is left is to show that
$N(n)=O(\frac{n}{k})$, and this is proven as lemma 2 in the appendix. 
\end{proof}

\section{Discussion}

In this paper we defined an Achlioptas process whose rule depends only on degree and showed that it exhibits a discontinuous transition if the number
of choices increases sufficiently with system size. However, we also noted that the transition may appear to be continuous on finite systems despite a rigorous proof that it is not. This is in great contrast to standard Achlioptas processes with a fixed number of choices, which may appear discontinuous despite being continuous in the thermodynamic limit. Regardless,
the percolation transition is greatly delayed and no global properties
of the network are needed in order to make the choice of which edge to add, requiring only the degrees of the nodes chosen.

Even if the number of choices is allowed to increase with system size,
it does not seem that a bounded size rule may result in a discontinuous
phase transition. In real networks, agents which create links will
often choose amongst many possible targets. In such a competitive
environment, it is unlikely that global information such as component
size is available. However, local information such as node degree
or some approximation of node degree might be available. Additionally, even if the number
of competing choices does not grow with system size, the analysis given here may still hold for very large systems so long as the number of choices is reasonably large. Consider, for instance, the case where the number of choices is $20$ and the number of nodes is less than $2^{20}$. In this case, the number of choices is at least $\log _2 (n)$, and so all of the results in section 3 hold.  

\vspace{0.1in}
\noindent
{\bf Acknowledgements:} We gratefully acknowledge support from the US Army Research Laboratory and the US Army Research Office under MURI award W911NF-13-1-0340, and Cooperative Agreement W911NF-09-2-0053, the Defense Threat Reduction Agency Basic Research Grant No. HDTRA1-10-1-0088 and NSF grant number ICES-1216048.

\section{Appendix}
\begin{proposition}
$\langle D_{0}(\frac{n}{2})\rangle=O(\frac{n}{k})$
\end{proposition}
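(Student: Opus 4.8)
The plan is to control $\langle D_0(t)\rangle$ by a one-step argument resting on a single deterministic fact: each edge addition raises the degrees of exactly two vertices, so $D_0(t)$ is non-increasing and drops by exactly $2$ in a step if and only if a $(0,0)$ edge is added — in the DRV process, if and only if $|V_{t,0}|\ge 2$ — and drops by at most $1$ otherwise. In particular $D_0(t)\ge n-2t$ with probability one. Put $E(t):=\langle D_0(t)\rangle-(n-2t)$, so $E(0)=0$ and $E(t)\ge 0$. Since the expected one-step decrease of $D_0$ is at least $2\Pr(|V_{t,0}|\ge 2)$ and at most $2$,
\[
0\;\le\;E(t+1)-E(t)\;=\;2-\big\langle D_0(t)-D_0(t+1)\big\rangle\;\le\;2\big(1-\Pr(|V_{t,0}|\ge 2)\big)\;=\;2\,\Pr(|V_{t,0}|\le 1).
\]
Summing from $0$ to $\lfloor n/2\rfloor-1$ and using $E(0)=0$, it then suffices to show $\sum_{t=0}^{\lfloor n/2\rfloor}\Pr(|V_{t,0}|\le 1)=O(n/k)$, as this gives $E(\lfloor n/2\rfloor)=O(n/k)$ and hence $\langle D_0(n/2)\rangle=(n-2\lfloor n/2\rfloor)+E(\lfloor n/2\rfloor)=O(n/k)$. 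Note that this route is concentration-free: it uses only the deterministic bound $D_0(t)\ge n-2t$, not concentration of $D_0(t)$ about its mean.

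To estimate a single term, condition on the history through step $t$. Given $D_0(t)$, the count $|V_{t,0}|$ is the number of the $k$ uniformly chosen distinct vertices that lie in the fixed set of $D_0(t)$ degree-$0$ vertices, i.e.\ hypergeometric with parameters $(n,D_0(t),k)$. Two elementary facts suffice. First, its lower tail is dominated — up to a factor $1+o(1)$, since $k=o(n)$ — by the corresponding binomial lower tail: $\Pr(|V_{t,0}|\le 1\mid D_0(t))\le(1+o(1))\Pr\!\big(\mathrm{Bin}(k,D_0(t)/n)\le 1\big)$. Second, $p\mapsto\Pr(\mathrm{Bin}(k,p)\le 1)=(1-p)^k+kp(1-p)^{k-1}$ is non-increasing on $[0,1]$ — its derivative equals $-k(k-1)p(1-p)^{k-2}\le 0$ — so the non-monotone summand $kp(1-p)^{k-1}$ is dominated by the decay of $(1-p)^k$. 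Since $D_0(t)\ge n-2t$ deterministically we have $D_0(t)/n\ge\pi_t:=1-2t/n$, and therefore for every $0\le t\le n/2$,
\[
\Pr(|V_{t,0}|\le 1)\;\le\;(1+o(1))\Big[(1-\pi_t)^{k}+k\,\pi_t\,(1-\pi_t)^{k-1}\Big].
\]

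It remains to sum this bound. As $t$ runs over $0,1,\dots,\lfloor n/2\rfloor$, the value $\pi_t=1-2t/n$ decreases through $[0,1]$ in steps of $2/n$, so the sum is a Riemann sum for the corresponding integral:
\[
\sum_{t=0}^{\lfloor n/2\rfloor}\Big[(1-\pi_t)^{k}+k\,\pi_t(1-\pi_t)^{k-1}\Big]\;\le\;(1+o(1))\,\frac{n}{2}\int_0^1\Big[(1-p)^k+kp(1-p)^{k-1}\Big]\,dp\;=\;(1+o(1))\,\frac{n}{k+1},
\]
using $\int_0^1(1-p)^k\,dp=\int_0^1 kp(1-p)^{k-1}\,dp=\tfrac{1}{k+1}$. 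Combining the three displays yields $E(\lfloor n/2\rfloor)=O(n/k)$, hence $\langle D_0(n/2)\rangle=O(n/k)$. I expect the only delicate points to be the bookkeeping for sampling without replacement (harmless for $k=o(n)$, since the relevant hypergeometric probabilities differ from their binomial counterparts by a factor $1+O(k/n)$) and, conceptually, the necessity of retaining the prefactor $\pi_t$: it is of constant order until $t$ is within $O(n/k)$ of $n/2$ and only then becomes small, and it is precisely this $(1-\pi_t)$ weighting in the integral that turns the crude estimate $O(n)$ into the sharp $O(n/k)$.
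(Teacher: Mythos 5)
Your proof is correct and takes essentially the same route as the paper: both use the deterministic bound $D_0(t)\ge n-2t$, a monotonicity argument to bound the per-step ``shortfall'' in the decrease of $\langle D_0\rangle$ by the probability that fewer than two degree-$0$ vertices are sampled, and then a Riemann-sum/integral estimate of $\frac{n}{2}\int_0^1\big[(1-p)^k+kp(1-p)^{k-1}\big]\,dp=\frac{n}{k+1}$. Your formulation via $E(t)=\langle D_0(t)\rangle-(n-2t)$ and the exact one-step inequality $0\le E(t+1)-E(t)\le 2\Pr(|V_{t,0}|\le1)$ is a cleaner and more rigorous packaging of the same computation: it does not require the mean-field ``difference equation'' the paper writes down (which is only an inequality, since in the DRV rule $D_0$ may drop by $0$ rather than $1$ when $|V_{t,0}|=1$), and it replaces the paper's claim that $k(1-x)^{k-1}x$ is monotone decreasing (false near $x=1/k$) with the correct monotonicity of the full tail $p\mapsto(1-p)^k+kp(1-p)^{k-1}$. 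The only spot that deserves slightly more care than you give it is the hypergeometric-to-binomial comparison for very small $t$, where $n-D_0(t)$ can be of order $k$ and the $1+o(1)$ factor is not uniform; but in that regime the tail probabilities are super-exponentially small so the contribution is negligible, as you anticipate.
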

\begin{proof}
Define $x(t):=\frac{\langle D_{0}(t)\rangle}{n}$, so that we have
the difference equation:

\begin{eqnarray}
(x(t+1)-x(t)) &=& \frac{1}{n}(-2+2((1-x(t)){}^{k}) \nonumber\\
{}& & {}+ (k(1-x(t)){}^{k-1}x(t))) \nonumber
\end{eqnarray}

Note that $x(t)\geq1-\frac{2t}{n}$,
which follows since at time $t$ only $t$ edges have been added to
the graph. Moreover, the functions  $2((1-x(t)){}^{k})$ and $(k(1-x(t)){}^{k-1}x(t)$ are monotone decreasing, so we have the bound:

\begin{eqnarray}
x(t+1)-x(t) & \leq & -\frac{2}{n}+\frac{2}{n}((\frac{2t}{n}){}^{k}) + k(\frac{2t}{n}){}^{k-1}(1-\frac{2t}{n}) \nonumber \\ 
& = & -\frac{2}{n}+\frac{2}{n}(\frac{2t}{n}){}^{k} + k((\frac{2t}{n})^{k-1} - (\frac{2t}{n})^{k}) \nonumber \\
& = & -\frac{2}{n}+(\frac{2}{n})^k(\frac{2t^{k}}{n} +k(\frac{nt^{k-1}}{2} - t^{k})) \nonumber
\end{eqnarray}

Since we have eliminated x from the right side of this bound, we can attempt to sum it directly and apply the equality  $x(t+1)=x(0)+\sum_{i=0}^{t}x(i+1)-x(i)$.

\begin{eqnarray}
x(t+1) & = & 1 + \sum_{i=0}^{t}x(i+1)-x(i)] \nonumber \\
& \leq  & 1 + \sum_{i=0}^{t}-\frac{2}{n}+(\frac{2}{n})^k(\frac{2i^{k}}{n} +k(\frac{ni^{k-1}}{2} - i^{k})) \nonumber \\
& \approx & 1 - \frac{2t}{n}+(\frac{2}{n})^k(\frac{2t^{k+1}}{n(k+1)} + k(\frac{nt^{k}}{2k} - \frac{t^{k+1}}{k+1})) \nonumber
\end{eqnarray}

The final line is an integral approximation. The result then follows by multiplying $x(t)$ by
$n$ and setting $t = \frac{1}{2}n$.
\end{proof}

\begin{proposition}
$\langle D_{1}(n)\rangle=O(\frac{n}{k})$
\end{proposition}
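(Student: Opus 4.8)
The plan is to exploit the ``restart'' at $t=\tfrac n2$. By Proposition~1 and the conservation identities $\sum_i\langle D_i(t)\rangle=n$ and $\sum_i i\langle D_i(t)\rangle=2t$ (subtracting the first from the second at $t=\tfrac n2$ gives $\sum_{i\ge 2}(i-1)\langle D_i(\tfrac n2)\rangle=\langle D_0(\tfrac n2)\rangle$), we have $\langle D_0(\tfrac n2)\rangle=O(\tfrac nk)$ and $\langle D_1(\tfrac n2)\rangle=n-O(\tfrac nk)$, so at time $\tfrac n2$ almost every vertex has degree $1$; and for $t>\tfrac n2$ the degree-$1$ vertices play exactly the role the degree-$0$ vertices played on $[0,\tfrac n2]$, since the DRV rule links two degree-$1$ vertices precisely when $|V_{t,0}|\le 1$ and $|V_{t,1}|\ge 2$. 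One route is to rerun the difference-equation estimate of Proposition~1 verbatim with ``$1$'' in place of ``$0$'' plus an additive $O(\tfrac nk)$ correction for the residual non-degree-$1$ vertices present at $\tfrac n2$. I would instead take a shorter route that needs only a crude deterministic bound on $D_1(t)$ together with the expectation bound of Proposition~1.

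Step~1 (bookkeeping). Split the $\tfrac n2$ steps of $(\tfrac n2,n]$ into \emph{type A} ($|V_{t,0}|\ge 2$: a $(0,0)$ edge is added, raising $D_1$ by $2$ and lowering $D_0$ by $2$), \emph{type B} ($|V_{t,0}|\le 1$ and $|V_{t,1}|\ge 2$: a $(1,1)$ edge is added, lowering $D_1$ by exactly $2$), and \emph{type C} (otherwise: a uniformly random edge of $V_t$ is added; since $V_t$ then contains at most one degree-$0$ and at most one degree-$1$ vertex, such a step changes $D_1$ by at most $+1$). Because $D_0$ is non-increasing over the whole process and equals $D_0(\tfrac n2)$ at the start of the interval, the number $a$ of type-A steps is at most $\tfrac12 D_0(\tfrac n2)$. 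With $a,b,c$ the counts of the three types, $a+b+c=\tfrac n2$, so $b\ge\tfrac n2-a-c$; summing the per-step changes of $D_1$ and using $D_1(\tfrac n2)\le n$ gives the deterministic bound $D_1(n)\le n+(2a+c)-2b\le 4a+3c$. Taking expectations, $\langle D_1(n)\rangle\le 2\langle D_0(\tfrac n2)\rangle+3\langle c\rangle=O(\tfrac nk)+3\langle c\rangle$, so it suffices to prove $\langle c\rangle=O(\tfrac nk)$, where $c=\#\{t\in(\tfrac n2,n]:|V_{t,0}|\le 1,\ |V_{t,1}|\le 1\}$.

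Step~2 (the count). Given the history $\mathcal F_t$, the probability of $\{|V_{t,0}|\le 1,\ |V_{t,1}|\le 1\}$ is at most $\Pr(|V_{t,1}|\le 1\mid\mathcal F_t)$; since $|V_{t,1}|$ has a hypergeometric law stochastically increasing in $D_1(t)$, this is a decreasing function of $D_1(t)$. A single edge changes $D_1$ by at least $-2$, so $D_1(t)\ge D_1(\tfrac n2)-2(t-\tfrac n2)$ deterministically; replacing $D_1(t)$ by this lower bound only increases the probability, and a standard tail estimate bounds the result by $(1-\underline y(t))^{k-1}\bigl(1+k\,\underline y(t)\bigr)$, where $\underline y(t)=\bigl(D_1(\tfrac n2)-2(t-\tfrac n2)\bigr)/n$. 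Summing over $t\in(\tfrac n2,n]$: the steps with $\underline y(t)\ge\tfrac1k$, which are all but the last $O(\tfrac nk)$ of them, contribute, after setting $w=1-\underline y(t)$ and approximating by an integral, $\tfrac n2\int_{\eta}^1 w^{k-1}(1+k-kw)\,dw+o(\tfrac nk)=\Theta(\tfrac nk)$ with $\eta=1-D_1(\tfrac n2)/n$, while the remaining $\tfrac12\bigl(n-D_1(\tfrac n2)\bigr)+O(\tfrac nk)$ steps are bounded trivially. Conditioning on $\mathcal F_{n/2}$, taking expectations, and using $\langle D_0(\tfrac n2)\rangle=O(\tfrac nk)$ (hence $\langle n-D_1(\tfrac n2)\rangle=O(\tfrac nk)$) gives $\langle c\rangle=O(\tfrac nk)$ and therefore $\langle D_1(n)\rangle=O(\tfrac nk)$.

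The main obstacle is the Poisson-type tail estimate in Step~2: one must verify that $\sum_t(1-\underline y(t))^{k-1}(1+k\,\underline y(t))$ is genuinely $O(\tfrac nk)$ and not larger (e.g.\ $O(\tfrac{n\log k}{k})$). It is $O(\tfrac nk)$ because $1-D_1(t)/n$ stays above $\tfrac1k$ for all but a window of width $O(\tfrac nk)$ ending at $t=n$ --- the same fact that makes the degree-$1$ ``chains'' drain out in that narrow window and that underlies the $O(\tfrac nk)$ critical window of Theorem~2. Minor technical points are the passage from the with-replacement to the without-replacement form of the hypergeometric tail bound, and the concentration needed to make Proposition~1's estimate hold with high probability rather than merely in expectation; both are handled exactly as in the Appendix proof of Proposition~1.
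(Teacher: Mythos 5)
Your proof is correct, but it follows a genuinely different route from the paper's. The paper's proof of this proposition is essentially a pointer: it observes that $\langle D_1(\tfrac n2)\rangle = n - O(\tfrac nk)$ and then asserts that the analysis of $\langle D_1\rangle$ on $[\tfrac n2, n]$ is ``nearly identical'' to the difference-equation argument for $\langle D_0\rangle$ on $[0,\tfrac n2]$ in Proposition~1, omitting the details. You explicitly consider and decline that verbatim rerun (which would require handling the perturbed initial condition, the possibility that $D_1$ \emph{increases} via residual $(0,0)$ edges, and the type-C steps that can increment $D_1$ without a $(1,1)$ edge) and instead give a self-contained two-step argument: a deterministic bookkeeping decomposition of the steps in $(\tfrac n2, n]$ into types A/B/C, yielding $D_1(n)\le 4a+3c$ and hence $\langle D_1(n)\rangle\le 2\langle D_0(\tfrac n2)\rangle+3\langle c\rangle$; followed by a probability-sum estimate showing $\langle c\rangle = O(\tfrac nk)$ using the deterministic decrement bound $D_1(t)\ge D_1(\tfrac n2)-2(t-\tfrac n2)$ and the binomial/hypergeometric lower-tail bound. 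The arithmetic in Step~1 checks out, and the integral estimate in Step~2 correctly gives $\Theta(\tfrac1k)$ rather than, say, $\Theta(\tfrac{\log k}{k})$, with the tail steps ($\underline y(t)<\tfrac1k$) absorbed into $\tfrac12\langle n - D_1(\tfrac n2)\rangle + O(\tfrac nk) = O(\tfrac nk)$. Your approach has the advantage of making explicit exactly which deviations from the idealized ``all $(1,1)$ edges'' picture need to be controlled and bounding each combinatorially, rather than re-deriving and re-summing a perturbed recursion; the paper's (omitted) approach is conceptually more symmetric but would need the same care to make rigorous. The technical caveats you flag at the end — hypergeometric versus binomial tail, and concentration beyond expectation — are indeed minor and handled the same way the Appendix handles them for Proposition~1.
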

\begin{proof}
By Proposition 1,  $\langle D_0(\frac{n}{2}) \rangle = O(\frac{n}{k})$, and it follows easily from (5) that $\langle D_1(\frac{n}{2}) \rangle = n - O(\frac{n}{k})$. The analysis of $\langle D_1 \rangle$ from $t=\frac{n}{2}$ to $t = n$ is then nearly identical to the analysis of $\langle D_0 \rangle $ from $t=0$ to $t=\frac{n}{2}$ in Proposition 1, and is omitted.

\end{proof}

\begin{lemma}
The expected number of red and green links is $O(\frac{n}{k})$ at any time
$t\leq n$.
\end{lemma}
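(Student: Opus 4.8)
The plan is to split the red and green edges into a few classes according to the degrees of their endpoints and the time of addition, and to bound the expected size of each class by a monovariant whose terminal value is already controlled by Propositions~1 and~2. From the edge-coloring conventions, every green edge is a non-$(0,0)$ edge added at some step $t\le\frac n2$, while every red edge is either an ``Other'' edge (one having an endpoint of degree $\ge 2$ at the moment it is added) or else a $(0,0)$ or $(0,1)$ edge added at a step $t>\frac n2$. Since edges are only ever added, the total number of red and green edges is monotone in $t$, so it suffices to bound at $t=n$ the expectations of three quantities: (a) the number of non-$(0,0)$ edges added before time $\frac n2$; (b) the total number of ``Other'' edges over the whole process; and (c) the number of edges incident to an isolated vertex added after time $\frac n2$.

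For (a), a non-$(0,0)$ edge is added at step $t$ only when $|V_{t,0}|\le 1$. Deterministically $D_0(t)\ge n-2t$, since only $t$ edges have been added and each destroys at most two isolated vertices, and the function $g(p):=(1-p)^k+kp(1-p)^{k-1}$ is non-increasing on $[0,1]$ because $g'(p)=-k(k-1)p(1-p)^{k-2}\le 0$. Hence, conditional on the configuration at time $t$, the probability that $|V_{t,0}|\le 1$ is at most $g(1-\tfrac{2t}{n})=(\tfrac{2t}{n})^{k}+k(1-\tfrac{2t}{n})(\tfrac{2t}{n})^{k-1}$; summing over $t\le\frac n2$ and comparing with $\frac n2\int_0^1(s^{k}+k(1-s)s^{k-1})\,ds=\frac n2\cdot\frac{2}{k+1}$ gives $O(n/k)$. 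For (c), $D_0(t)$ is non-increasing and every edge incident to an isolated vertex strictly decreases it, so the count in (c) is at most $D_0(\frac n2)$, whose expectation is $O(n/k)$ by Proposition~1.

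The crux is (b). Consider the quantity $\Phi(t):=\sum_{i\ge 3}(i-2)D_i(t)=\sum_{v}\max(\deg_t(v)-2,0)$. It never decreases, and adding an edge increases it by exactly the number of that edge's endpoints whose degree is already $\ge 2$; since an ``Other'' edge has at least one such endpoint, the total number of ``Other'' edges over the whole process is at most $\Phi(n)$. On the other hand, subtracting twice the identity $\sum_i\langle D_i(n)\rangle=n$ from $\sum_i i\langle D_i(n)\rangle=2n$ yields $\sum_{i\ge 3}(i-2)\langle D_i(n)\rangle=2\langle D_0(n)\rangle+\langle D_1(n)\rangle$, so $\langle\Phi(n)\rangle=2\langle D_0(n)\rangle+\langle D_1(n)\rangle=O(n/k)$, using Proposition~2 for $\langle D_1(n)\rangle$ and $D_0(n)\le D_0(\frac n2)$ together with Proposition~1 for $\langle D_0(n)\rangle$. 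Adding the three bounds proves the lemma, uniformly in $t\le n$ by the monotonicity noted above.

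I expect step (b) to be the main obstacle: the useful observation is that ``number of edges touching a vertex of degree $\ge 2$'' is exactly controlled by the degree monovariant $\Phi$, and that the degree conservation law pins $\Phi(n)$ to the two tallies $D_0(n)$ and $D_1(n)$ that the previous section has already shown to be $O(n/k)$. Step~(a) is the only genuine computation, and it is the same Beta-integral estimate that underlies Proposition~1; step~(c) is immediate.
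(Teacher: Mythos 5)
Your proof is correct, and it takes a route that differs from the paper's in two of its three pieces. For the green edges (a), the paper argues indirectly: each $(0,0)$ edge decreases $D_0$ by two while every other edge decreases it by at most one, so $D_0(\tfrac n2)=O(n/k)$ already forces the number of non-$(0,0)$ edges before $\tfrac n2$ to be $O(n/k)$; you instead bound the per-step probability of the event $\lvert V_{t,0}\rvert\le 1$ and integrate, which is a self-contained Beta-integral computation (the same one that drives Proposition~1) and gives the same bound. For the $(0,0)$/$(0,1)$ red edges (c) your argument is the same as the paper's. For the ``Other'' edges (b), the paper's proof is terse and not fully watertight as written: it says that each such edge ``creates at least one node of degree $3$ or higher'' and that there are $O(n/k)$ such nodes at $t=n$, but a single degree-$\ge 3$ vertex can be the high-degree endpoint of many ``Other'' edges, so a raw count of degree-$\ge 3$ vertices does not by itself bound the number of ``Other'' edges. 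Your monovariant $\Phi(t)=\sum_{i\ge 3}(i-2)D_i(t)$ closes that gap cleanly: $\Phi$ is non-decreasing, increases by exactly the number of endpoints already at degree $\ge 2$ (hence by at least $1$ per ``Other'' edge), and the degree-conservation identity pins $\langle\Phi(n)\rangle=2\langle D_0(n)\rangle+\langle D_1(n)\rangle=O(n/k)$ via Propositions~1 and~2. So your version is slightly longer but arguably the more careful one; in particular, (b) is the statement the paper likely intended, made precise.
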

\begin{proof}

Recall that a red link is defined as a $(0,1)$ or $(0,0)$ link added after time $t = n/2$ or an $(i,j)$ link where either $i$ or $j$ is not $0$ or $1$ added at any time. By proposition 1 the number of nodes of degree $0$ at time $n/2$ is $O(n/k)$, and it follows trivially that the number of $(0,0)$ and $(0,1)$ red links is $O(n/k)$ at any time throughout the process, and similarly for green links. All that remains is to show that there are $O(n/k)$ red links involving higher degree nodes. However, note that adding an $(i,j)$ link where either $i$ or $j$ is not $0$ or $1$ would create at least one node of degree $3$ or higher. But (5) together with Proposition 2 shows that there are $O(n/k)$ nodes of degree $3$ or higher at time $t = n$. The result follows.  

\end{proof}





\begin{lemma}
If $\frac{n}{k(n)}=\omega(\log n)$, then w.h.p. $N(n)=O(\frac{n}{k}$)\end{lemma}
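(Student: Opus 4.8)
The plan is to use the fact that $N(n)$ equals the total number of \emph{redundant} edges ever added, where an edge added at step $t$ is redundant if its two endpoints already lay in a common component just before step $t$: the final graph has $n$ vertices and $n$ edges and we start from $n$ isolated nodes, so every non-redundant edge lowers the component count by exactly one while every redundant edge leaves it unchanged. It therefore suffices to show that at most $O(n/k)$ redundant edges are added, w.h.p. A redundant edge has both endpoints of degree $\ge 1$ (a degree-0 endpoint is its own component, so that edge always merges), hence redundant edges arise only at steps that link two degree-1 vertices or at an ``otherwise'' step (the uniformly random edge inside $V_t$ taken when $|V_{t,0}|<2$ and $|V_{t,1}|<2$); a $(0,0)$-link step is never redundant.

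First I would pin down the degree-count trajectory. For $t<\frac n2$ the event $|V_{t,0}|<2$ has conditional probability at most $(1+kD_0(t)/n)e^{-kD_0(t)/n}$, while $D_0(t)\ge n-2t$ holds deterministically; since these ``availability'' events are conditionally independent of the past given the fresh draws $V_t$, a Chernoff bound gives that w.h.p. at most $O(n/k)$ edges before $\frac n2$ are not $(0,0)$-links, so $D_0(\frac n2)=O(n/k)$ and $D_1(\frac n2)=n-O(n/k)$. The count $D_0$ never increases, so $D_0(t)=O(n/k)$ for all $t\ge\frac n2$ and at most $O(n/k)$ steps after $\frac n2$ are $(0,0)$-links. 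Now introduce a stopping time $\tau$: the first moment at which either the number of ``deviation'' steps (steps after $\frac n2$ that are not $(1,1)$-links) or the number of redundant edges so far exceeds $Cn/k$, for a large constant $C$. For $t<\tau$ one has, deterministically, $D_1(t)=2(n-t)+O(Cn/k)$ — each $(1,1)$-link lowers $D_1$ by exactly $2$ and there are at most $Cn/k$ other steps after $\frac n2$ — and then $\sum_i iD_i(t)=2t$ together with $\sum_i D_i(t)=n$ forces $\sum_{i\ge3}D_i(t)=O(Cn/k)$, i.e., at most $O(n/k)$ vertices of degree $\ge3$.

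With these a priori bounds, the conditional probability $p_t$ that step $t$ produces a redundant edge, for $t<\tau$, is at most
\[
\min\!\Bigl(1,\ \frac{\#\{\text{co-component pairs of degree-1 vertices at }t\}}{\binom{D_1(t)}{2}}\Bigr)\;+\;\Bigl(1+\tfrac{kD_1(t)}{n}\Bigr)e^{-kD_1(t)/n},
\]
since conditioned on a $(1,1)$-link the linked pair is uniform over all pairs of degree-1 vertices. Counting leaves component by component, each path component supplies one co-component pair, and the $O(n/k)$ bound on the degree-$\ge3$ ``excess'' $\sum_{i\ge3}(i-2)D_i(t)$ limits both the number of branching components and their leaf counts, so the numerator is $\le N(t)+O((n/k)^2)$; feeding in $D_1(t)=2(n-t)+O(n/k)$ and $N(t)\le n-t+Cn/k$, the first term is $O\!\bigl(\tfrac{1}{n-t}+\tfrac{(n/k)^2}{(n-t)^2}\bigr)$ when $n-t=\omega(n/k)$ and trivially $\le1$ on the remaining $O(n/k)$ steps. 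Summing over $t\in[\frac n2,n)$ gives $O(\log n)+O(n/k)$ from the first term — here the hypothesis $n/k=\omega(\log n)$ absorbs the $\log$ — and, after the substitution $u=k(n-t)/n$, $(n/k)\int_0^{k/2}O\bigl((1+u)e^{-2u}\bigr)\,du=O(n/k)$ from the second. Hence $\sum_{t<\tau}p_t=O(n/k)$ \emph{deterministically}, and applying a Freedman-type inequality to the supermartingale $R(t\wedge\tau)-\sum_{s<t\wedge\tau}p_s$ (bounded increments, predictable quadratic variation $O(n/k)$) yields $\Pr\bigl(R(n\wedge\tau)>Cn/k\bigr)=e^{-\Omega(n/k)}=o(1)$ for $C$ large; the same estimate governs the deviation count, so $\Pr(\tau<\infty)=o(1)$, and therefore w.h.p. $N(n)=R(n)=O(n/k)$.

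The step I expect to be the main obstacle is this last concentration argument and the circularity it has to dissolve: the estimate $D_1(t)\approx2(n-t)$ on which the bound on $p_t$ rests already presupposes that few deviation steps and few redundant edges have occurred, which is the conclusion. The stopping time $\tau$ is meant to break this loop, but it must be executed with a tail bound tight enough to survive the entire range $k=o(n/\log n)$: when $k$ is nearly $n/\log n$ the target $O(n/k)$ is only $(\log n)^{1+o(1)}$, so a naive Azuma bound on $D_1$, with its $O(\sqrt{n\log n})$ slack, is useless, and one must instead use that for large $k$ the events $|V_{t,0}|\ge2$ and $|V_{t,1}|\ge2$ fail with probability only $e^{-\Omega(kD_0(t)/n)}$ and $e^{-\Omega(kD_1(t)/n)}$ respectively — which is what makes the deviation count genuinely $O(n/k)$ rather than merely $o(n)$.
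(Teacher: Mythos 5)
Your proposal is correct in outline and gets the same bound, but it takes a genuinely different route from the paper's argument, and the two are worth contrasting. The paper does not count redundant edges in the actual component structure; instead it bounds $N(n)$ above by the number of \emph{blue} components $N^B(n)$ (blue components being always paths), writes $N^B(n) = n - (\text{blue edges}) + (\text{internal blue edges})$, invokes Lemma~1 to get blue edges $= n - O(n/k)$, and then argues that each $(1,1)$ blue edge is internal to a blue component with probability roughly $\frac{1}{D_1(t)}$. It disposes of this last count by a direct stochastic domination, replacing $\frac{1}{D_1(t(i))}$ by $\frac{1}{n-i}$, and concluding $\langle X\rangle = O(\log n) = O(n/k)$ via the hypothesis $n/k = \omega(\log n)$. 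Your approach instead observes $N(n)$ equals the exact number of redundant edges in the \emph{full} graph, bounds redundancy at $(1,1)$ steps using co-component pairs of degree-$1$ vertices and redundancy at ``otherwise'' steps by the failure probability $(1+kD_1/n)e^{-kD_1/n}$, and handles the circular dependence between the degree trajectory and the redundancy count by a stopping time plus Freedman's inequality. What your route buys is a fully explicit treatment of the self-reference that the paper's stochastic domination $D_1(t(i)) \ge n-i$ glosses over (the paper does not actually prove this lower bound on $D_1$ uniformly in $i$, and the correct deterministic bound $D_1(t) \approx 2(n-t) + O(n/k)$ requires exactly the kind of a priori control you establish via $\tau$). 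What the paper's route buys is simplicity: working with blue components (paths, exactly two blue leaves each) means the ``co-component pair'' count is trivially $\le D_1/2$ without needing the leaf-counting estimate $\sum_C\binom{L(C)}{2} \le N(t) + O((n/k)^2)$, and the non-$(1,1)$ steps are absorbed into Lemma~1 rather than summed separately. One small caveat in your writeup: when you bound co-component pairs by $N(t) + O((n/k)^2)$, the leaf formula $L(C) = 2 + \sum_{v\in C:\deg\ge 3}(\deg v - 2)$ holds only for tree components; components containing one of your $\le Cn/k$ redundant edges have one fewer leaf per cycle, which only helps, but you should say so — and you should also confirm there are no degree-$0$ ``leaves'' in nontrivial components, which is immediate. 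With those details spelled out the argument goes through.
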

\begin{proof}
It is sufficient to show that the number of blue components at time
n is $O(\frac{n}{k})$, since the number of components at time n, $N(n)$, will be smaller.
The number of blue components is determined by the number of blue
edges added compared to the number of edges added internally to an
existing blue component. The number of internal blue edges added is
distributed as $\sum_{i=1}^{(1-\theta(\frac{1}{k}))n}1_{p(t(i))}$
where $p(t_{i})=\frac{1}{D_{1}(t_{i})}$, $t(i)$ is the time at which
the $i$th (1,1) edge is added, since (0,0) edges can never be internal
to a single blue component. Note that $\sum_{i=1}^{(1-\theta(\frac{1}{k}))n}1_{p(t(i))}$
is stochastically bounded by $X=\sum_{i=1}^{(1-\theta(\frac{1}{k}))n}1_{p_{i}}$
where $p_{i}=\frac{1}{n-i}$. Hence $\langle X\rangle=\sum_{i=(1-\theta(\frac{1}{k}))n}^{n}\frac{1}{i}=O(\log n)$.
The result then follows from a Chernoff bound on $X$, which
shows that:

\begin{eqnarray}
\Pr(X>\langle X\rangle+t)=\Pr(X>O(\log n)+t)\leq e^{-\frac{t^{2}}{2(E(X)\epsilon+t)}} \nonumber
\end{eqnarray}

\end{proof}

\bibliographystyle{unsrt}
\bibliography{infchoicebib}

\begin{thebibliography}{10}

\bibitem{stauffer1985introduction}
D.~Stauffer.
\newblock {\em Introduction to percolation theory}.
\newblock Taylor \& Francis, 1985.

\bibitem{NW1999}
M.~E.~J. Newman and D.~J. Watts.
\newblock Scaling and percolation in the small-world network model.
\newblock {\em Phys. Rev. E}, 60:7332--7342, Dec 1999.

\bibitem{Cohen2000}
R.~Cohen, K.~Erez, D.~ben Avraham, and S.~Havlin.
\newblock Resilience of the internet to random breakdowns.
\newblock {\em Phys. Rev. Lett.}, 85:4626--4628, Nov 2000.

\bibitem{Calloway2000}
D.~S. Callaway, M.~E.~J. Newman, S.~H. Strogatz, and D.~J. Watts.
\newblock Network robustness and fragility: Percolation on random graphs.
\newblock {\em Phys. Rev. Lett.}, 85:5468--5471, Dec 2000.

\bibitem{MoorePRE}
C.~Moore and M.~E.~J. Newman.
\newblock Epidemics and percolation in small-world networks.
\newblock {\em Phys. Rev. E}, 61:5678--5682, May 2000.

\bibitem{Newman:2010:NI:1809753}
M.~E.~J. Newman.
\newblock {\em Networks: An Introduction}.
\newblock Oxford University Press, Inc., New York, NY, USA, 2010.

\bibitem{riordanAnnals2012}
O.~Riordan and L.~Warnke.
\newblock Achlioptas process phase transitions are continuous.
\newblock {\em Ann. Appl. Probab.}, 22(4):1450--1464, 2012.

\bibitem{azar1999}
Y.~Azar, A.~Z. Broder, A.~R. Karlin, and E.~Upfal.
\newblock Balanced allocations.
\newblock {\em {SIAM} Journal on Computing}, 29:180--200, 1999.

\bibitem{Mitzenmacher00thepower}
M.~Mitzenmacher, A.~W. Richa, and R.~Sitaraman.
\newblock The power of two random choices: A survey of techniques and results.
\newblock In {\em in Handbook of Randomized Computing}, pages 255--312. Kluwer,
  2000.

\bibitem{Mitzenmacher:2001:PTC:504336.504343}
M.~Mitzenmacher.
\newblock The power of two choices in randomized load balancing.
\newblock {\em IEEE Trans. Parallel Distrib. Syst.}, 12(10):1094--1104, October
  2001.

\bibitem{Sinclair:2010:DSR:1886521.1886576}
A.~Sinclair and D.~Vilenchik.
\newblock Delaying satisfiability for random 2sat.
\newblock In {\em Proceedings of the 13th international conference on
  Approximation, and 14 the International conference on Randomization, and
  combinatorial optimization: algorithms and techniques}, APPROX/RANDOM'10,
  pages 710--723, Berlin, Heidelberg, 2010. Springer-Verlag.

\bibitem{Bohman2001}
T.~Bohman and A.~Frieze.
\newblock Avoiding a giant component.
\newblock {\em Random Structures and Algorithms}, 19(1):75--85, 2001.

\bibitem{Achlioptas13032009}
D.~Achlioptas, R.~M. D'Souza, and J.~Spencer.
\newblock Explosive percolation in random networks.
\newblock {\em Science}, 323(5920):1453--1455, 2009.

\bibitem{PhysRevLett.105.255701}
R.~A. da~Costa, S.~N. Dorogovtsev, A.~V. Goltsev, and J.~F.~F. Mendes.
\newblock Explosive percolation transition is actually continuous.
\newblock {\em Phys. Rev. Lett.}, 105:255701, Dec 2010.

\bibitem{nagler11}
J.~Nagler, A.~Levina, and M.~Timme.
\newblock Impact of single links in competitive percolation.
\newblock {\em Nat. Phys.}, 7(3):265--270, 2011.

\bibitem{PhysRevLett.106.225701}
P.~Grassberger, C.~Christensen, G.~Bizhani, S.-W. Son, and M.~Paczuski.
\newblock Explosive percolation is continuous, but with unusual finite size
  behavior.
\newblock {\em Phys. Rev. Lett.}, 106:225701, May 2011.

\bibitem{manna2011new}
S.~S. Manna and A.~Chatterjee.
\newblock A new route to explosive percolation.
\newblock {\em Physica A}, 390(2):177--182, 2011.

\bibitem{lee2011continuity}
H.~K. Lee, B.~J. Kim, and H.~Park.
\newblock Continuity of the explosive percolation transition.
\newblock {\em Phys. Rev. E}, 84(2):020101, 2011.

\bibitem{tian2012nature}
L.~Tian and D.-N. Shi.
\newblock The nature of explosive percolation phase transition.
\newblock {\em Phys. Lett. A}, 376(4):286--289, 2012.

\bibitem{Riordan15072011}
O.~Riordan and L.~Warnke.
\newblock Explosive percolation is continuous.
\newblock {\em Science}, 333(6040):322--324, 2011.

\bibitem{chen2011exp}
W.~Chen and R.~M. D'Souza.
\newblock Explosive percolation with multiple giant components.
\newblock {\em Phys Rev Lett}, 106(11):115701, 2011.

\bibitem{panagiotou2011explosive}
Konstantinos Panagiotou, Reto Sp{\"o}hel, Angelika Steger, and Henning Thomas.
\newblock Explosive percolation in erd{\H{o}}s-r{\'e}nyi-like random graph
  processes.
\newblock {\em Electronic Notes in Discrete Mathematics}, 38:699--704, 2011.

\bibitem{PhysRevX.2.031009}
J.~Nagler, T.~Tiessen, and H.~W. Gutch.
\newblock Continuous percolation with discontinuities.
\newblock {\em Phys. Rev. X}, 2:031009, Aug 2012.

\bibitem{Cho2013science}
Hwang H. Herrmann~J. Cho, Y. and B.~Kahng.
\newblock Avoiding a spanning cluster in percolation models.
\newblock {\em Science}, 339(6124):1185--1187, 2013.

\bibitem{Spencer:2007:BCG:1390104.1390105}
J.~Spencer and N.~Wormald.
\newblock Birth control for giants.
\newblock {\em Combinatorica}, 27(5):587--628, September 2007.

\bibitem{Janson_phasetransitions}
S.~Janson and J.~Spencer.
\newblock Phase transitions for modified {Erdos-Renyi} processes, 2012.
\newblock arXiv:1005.4494v1 [math.CO].

\bibitem{ER}
P.~Erd{\H{o}}s and A.~R{\'e}nyi.
\newblock On the evolution of random graphs.
\newblock {\em Magyar Tud. Akad. Mat. Kutat{\'o} Int. K{\"o}zl}, 5:17--61,
  1960.

\bibitem{durrett2010probability}
R.~Durrett.
\newblock {\em Probability: Theory and Examples}.
\newblock Number v. 3 in Cambridge Series in Statistical and Probabilistic
  Mathematics. Cambridge University Press, 2010.

\bibitem{PhysRevLett.103.255701}
E.~J. Friedman and A.~S. Landsberg.
\newblock Construction and analysis of random networks with explosive
  percolation.
\newblock {\em Phys. Rev. Lett.}, 103:255701, Dec 2009.

\bibitem{Coalescense1999}
D.~Aldous.
\newblock Deterministic and stochastic models for coalescence (aggregation and
  coagulation): a review of the mean-field theory for probabilists.
\newblock {\em Bernoilli}, 5(1):3--48, 1999.

\end{thebibliography}

\end{document}